\newtheorem{lemma}[equation]{Lemma}
\newtheorem{corollary}[equation]{Corollary}
\newtheorem{theorem}[equation]{Theorem}
\newcommand{\qed}{\hfill\ensuremath{\Box}}
\newenvironment{proof}{\noindent\textbf{Proof.}
}{\par\medskip}
\newcommand{\R}{\ensuremath{\mathds R}}
\newcommand{\T}{\ensuremath{\mathcal T}}
\newcommand{\F}{\ensuremath{\mathcal F}}
\newcommand{\figscale}{1.1}
\begin{document}

\title{On Universal Point Sets for Planar Graphs}

\author{Jean Cardinal$^{1,}$\thanks{Partially supported by the ESF EUROCORES
    programme EuroGIGA, CRP ComPoSe. The research was carried out while the
    first author was at ETH Z\"urich.} \and Michael
  Hoffmann$^{2,}$\thanks{Partially supported by the ESF EUROCORES programme
    EuroGIGA, CRP GraDR and the Swiss National Science Foundation, SNF Project
    20GG21-134306.} \and Vincent Kusters$^{2,\dag}$}

\date{\small $^1$D\'epartement d'Informatique, Universit\'e Libre de
  Bruxelles, \texttt{jcardin@ulb.ac.be}\\
  $^2$Institute of Theoretical Computer Science, ETH Z\"urich,
  \texttt{\{hoffmann,kustersv\}@inf.ethz.ch}}


\maketitle%
\begin{abstract}
  A set $P$ of points in $\R^2$ is $n$-universal, if every planar graph on $n$
  vertices admits a plane straight-line embedding on $P$. Answering a question
  by Kobourov, we show that there is no $n$-universal point set of size $n$, for
  any $n\ge 15$. Conversely, we use a computer program to show that there exist
  universal point sets for all $n\le 10$ and to enumerate all corresponding
  order types. Finally, we describe a collection $\mathcal{G}$ of $7'393$ planar
  graphs on $35$ vertices that do not admit a simultaneous geometric embedding
  without mapping, that is, no set of $35$ points in the plane supports a plane
  straight-line embedding of all graphs in $\mathcal{G}$.
\end{abstract}
\sloppy

\section{Introduction}
\label{sec:introduction}

We consider plane, straight-line embeddings of graphs in $\R^2$. In those
embeddings, vertices are represented by pairwise distinct points, every edge is
represented by a line segment connecting its endpoints, and no two edges
intersect except at a common endpoint.

An {\em $n$-universal} (or short \emph{universal}) point set for planar graphs
admits a plane straight-line embedding of all graphs on $n$ vertices. A
longstanding open problem is to give precise bounds on the minimum number of
points in an $n$-universal point set. The currently known asymptotic bounds are
apart by a linear factor. On the one hand, it is known that every planar graph
can be embedded on a grid of size $n-1\times
n-1$~\cite{fpp-hdpgg-90,s-epgg-90}. On the other hand, it was shown that at
least $1.235n$ points are necessary~\cite{Kurowski04}, improving earlier bounds
of $1.206n$~\cite{ck-lbsuspg-89} and $n+\sqrt{n}$~\cite{fpp-hdpgg-90}.

The following, somewhat simpler question was asked ten years ago by
Kobourov~\cite{dmo-topp-}: what is the largest value of $n$ for which a
universal point set of size $n$ exists? We prove the following.
\begin{theorem}
\label{thm:main}
There is no $n$-universal point set of size $n$, for any $n\ge 15$.
\end{theorem}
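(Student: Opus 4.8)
The plan is a counting argument: show that there are more isomorphism types of maximal planar graphs on $n$ vertices than any single set of $n$ points can realise as geometric triangulations. It suffices to treat maximal planar graphs, since every planar graph on $n$ vertices is a spanning subgraph of one and deleting edges preserves a plane straight-line drawing. So suppose for contradiction that a set $P$ of $n\geq 15$ points admits a plane straight-line embedding of every maximal planar graph on $n$ vertices; each such embedding necessarily maps the $n$ vertices bijectively onto $P$.

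The first step is to pin down the shape of $P$. For $n\geq 4$ maximal planar graphs exist, and in any plane straight-line drawing the unbounded face is bounded exactly by the convex hull; since every face of a maximal planar graph is a triangle, $\mathrm{conv}(P)$ must be a triangle. Write $\mathrm{conv}(P)=abc$ with $a,b,c\in P$ and the remaining $n-3$ points in the open triangle. Then every plane straight-line embedding of a maximal planar graph on $P$ is precisely a geometric triangulation of the point set $P$ with outer triangle $abc$, and the abstract graph it realises is obtained by forgetting the coordinates.

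Now the reduction. Distinct isomorphism types of maximal planar graphs force distinct geometric triangulations of $P$, so the number $T(n)$ of isomorphism types of maximal planar graphs on $n$ vertices is at most the number $t(P)$ of triangulations of the point set $P$. Hence it suffices to show that $t(P)<T(n)$ for every $n$-point set $P$ with a triangular convex hull and every $n\geq 15$. I would do this by establishing an explicit upper bound $t(P)\leq g(n)$ valid for all triangular-hull point sets — for instance via a recursive decomposition along the neighbourhood fan of the hull vertex $a$: such a fan is a monotone path from $b$ to $c$ through a subset of the interior points, each of whose triangles with $a$ is empty, and the region below it is a smaller polygon with the remaining interior points inside, to which one recurses; summing over all admissible fans, with care, yields $g(n)$. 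A triangular hull is the most restrictive boundary, so the resulting $g(n)$ should have a strictly smaller exponential growth rate than $T(n)\sim c\,n^{-7/2}(256/27)^{n}$, and feeding in the known exact small values of $T(n)$ would locate the crossover at exactly $n=15$.

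The hard part is precisely this last step: squeezing $g(n)$ far enough down — below exponential base $256/27\approx 9.48$, and with good enough polynomial and constant factors — that it already beats $T(n)$ at $n=15$ rather than only in the limit; the two counts are genuinely close for $11\leq n\leq 14$, which is why the statement stops at $n\geq 15$. Should a clean bound on $t(P)$ be awkward, an alternative is to run the same comparison against a carefully chosen subfamily of maximal planar graphs, for example those assembled by a controlled stacking process, for which both the number of graphs and the number of realisations on $P$ obey transparent recursions; but the difficulty — bounding the number of realisations from above — is unchanged.
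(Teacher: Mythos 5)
Your reduction to ``$t(P)<T(n)$ for every $n$-point set $P$'' is where the argument breaks. The paper does not know, and nobody knows, an upper bound on the number of triangulations of an arbitrary planar point set that comes anywhere near the growth rate $T(n)\sim c\,n^{-7/2}(256/27)^n\approx 9.48^n$ of combinatorial triangulations: the best known upper bounds on $t(P)$ are of the order $30^n$ (and the sketched fan/hull-vertex recursion is essentially how such bounds are proved, so it will not do better), while the best known \emph{lower} bounds on $\max_P t(P)$ are already around $8.65^n$. Whether $\max_P t(P)$ is even asymptotically below $(256/27)^n$ is an open problem, and the restriction to a triangular convex hull does not visibly change this. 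So the ``hard part'' you flag is not a technical chore but a step that cannot currently be carried out, and even if the asymptotic inequality were available, there is no reason the polynomial and constant factors would place the crossover at $n=15$; that threshold is an artefact of the paper's particular counting, not an intrinsic feature of your comparison.

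The paper escapes exactly this trap by switching to \emph{labeled} objects, which is the idea your closing paragraph brushes against but then dismisses. It builds a family $\T_n$ of labeled stacked triangulations (planar 3-trees) with $|\T_n|=2^{n-4}(n-3)!$ members, and observes that any fixed bijection of $[n]$ onto $P$ is a plane straight-line embedding of \emph{at most one} member of $\T_n$ (each new vertex $i\ge 5$ must attach to the triangle of the current drawing containing its point). Hence the number of realisable members of $\T_n$ is at most $n!$ \emph{for free} --- the upper bound on realisations, which you describe as the unchanged difficulty, becomes trivial in the labeled setting. Since $2^{n-4}(n-3)!$ versus $n!$ alone is not quite enough at $n=15$, the paper gains the missing constant factor from the rectilinear crossing number of $K_n$ (\'Abrego--Fern\'andez-Merchant): more than a $\tfrac38\cdot\tfrac{n-4}{n}$ fraction of bijections place the points of $\{1,2,3,4\}$ in convex position and therefore realise no member of $\T_n$ at all. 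This yields $2^{n-4}(n-3)!>\tfrac18(5n+12)(n-1)!$ for $n\ge 15$, i.e.\ $2^{n-1}>(5n+12)(n-1)(n-2)$, which is checked at $n=15$ and propagated by comparing growth rates. To salvage your write-up you would need to replace the unlabeled count $t(P)$ by a labeled one and add the crossing-number ingredient; as it stands the central inequality is unproved and possibly false.
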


At some point, the Open Problem Project page dedicated to the
problem~\cite{dmo-topp-} mentioned that Kobourov proved there exist 14-universal
point sets of size 14. If this is correct, our bound is tight, and the answer to
the above question is $n=14$. After verification, however, this claim appears to
be unsubstantiated~\cite{Kobourov}. We managed to check that there exist
universal point sets only up to $n\leq 10$. Further investigations are ongoing.

\paragraph{Overview.}
Section~\ref{sec:uni} is devoted to the proof of Theorem~\ref{thm:main}.  It
combines a labeled counting scheme for \emph{planar 3-trees} (also known as
\emph{stacked triangulations}) that is very similar to the one used by Kurowski
in his asymptotic lower bound argument~\cite{Kurowski04} with known lower bounds
on the rectilinear crossing number~\cite{AF05,lvww-cqk-04}.  Note that although
planar 3-trees seem to be useful for lower bounds, a recent preprint from Fulek
and T\'oth~\cite{FT12} shows that there exist $n$-universal point sets of size
$O(n^{5/3})$ for planar 3-trees.

For a collection $\mathcal G=\{G_1,\ldots,G_k\}$ of planar graphs on $n$
vertices, a \emph{simultaneous geometric embedding without mapping} for
$\mathcal G$ is a collection of plane straight-line embeddings $\phi_i:G_i\to
P$ onto the same set $P\subset\R^2$ of $n$ points.

In Section~\ref{sec:sim}, we consider the following problem: what is the largest
natural number $\sigma$ such that every collection of $\sigma$ planar graphs on
the same number of vertices admit a simultaneous geometric embedding without
mapping? From the F\'ary-Wagner Theorem~\cite{f-slrpg-48,w-bzv-36} we know that
$\sigma\ge 1$.  We prove the following upper bound:
\begin{theorem}\label{thm:sim}
  There is a collection of $7'393$ planar graphs on $35$ vertices that do not
  admit a simultaneous plane straight-line embedding without mapping, hence
  $\sigma < 7'393$.
\end{theorem}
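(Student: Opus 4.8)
The plan is to obtain Theorem~\ref{thm:sim} as a quantitative, finite refinement of Theorem~\ref{thm:main} at $n=35$. The collection $\mathcal G$ will consist exclusively of planar $3$-trees on $35$ vertices, and the proof that $\mathcal G$ has no common $35$-point support will retrace the labeled counting scheme behind Theorem~\ref{thm:main}, but this time keeping track of exactly which $3$-trees the argument uses.

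First I would isolate the structural facts the counting scheme rests on. If a set $P$ of $35$ points admits a plane straight-line embedding of a triangulation on all of $P$ --- in particular of any planar $3$-tree --- then the convex hull of $P$ must be a triangle, since the outer face of such a drawing is bounded by the hull while every face of a triangulation is a triangle. Assuming the hull $abc$, every embedding of a planar $3$-tree on $P$ induces a triangulation of the point set $P$, hence a recursive ternary decomposition: the first stacked vertex lands on an interior point, splitting $abc$ into three sub-triangles, the remaining $31$ points are partitioned among their interiors, and the three induced sub-$3$-trees are embedded recursively. The heart of the scheme is that, for a fixed $P$, only very few labeled $3$-trees arise this way: at each node the chosen interior point realizes one of only linearly many ternary splits of the corresponding sub-point-set, and --- this is where the rectilinear crossing number enters --- the lower bounds of~\cite{AF05,lvww-cqk-04} on the number of convex $4$-subsets of $P$ constrain the global structure of $P$ and thereby how these splits can be combined, yielding an explicit upper bound on the number of planar $3$-trees on $35$ vertices that any single $35$-point set can simultaneously support. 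Since the proof of Theorem~\ref{thm:main} shows this bound falls short of the total number of planar $3$-trees on $35$ vertices, I would run the scheme forward at $n=35$ while recording the $3$-trees it excludes, assemble a minimal (or near-minimal) sub-family whose simultaneous embeddability is contradicted, and verify by computer enumeration over planar $3$-trees on $35$ vertices that such a family of exactly $7'393$ graphs exists and works. The choice $n=35$, rather than an arbitrary $n\ge 15$, is itself part of the optimization: it is where this extraction yields the smallest collection one can certify.

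The main obstacle is quantitative rather than conceptual. Kurowski's counting scheme and its sharpening by the crossing-number bounds are naturally phrased asymptotically; turning them into a concrete inequality at $n=35$ with the best available explicit constant in $\overline{\operatorname{cr}}(K_{35})\ge c\binom{35}{4}$, and doing so with essentially no slack, is what makes a collection as small as $7'393$ attainable. The delicate points are (i) a loss-free version of the labeled counting step and its interaction with the crossing-number input, (ii) a correct accounting of which planar $3$-trees are actually ruled out, so that $\mathcal G$ is both valid and as small as claimed, and (iii) the supporting computer enumeration exhibiting $\mathcal G$ and confirming $|\mathcal G|=7'393$. I expect (i), together with the crossing-number interaction, to be the crux; once $\mathcal G$ is correctly specified, (iii) is routine.
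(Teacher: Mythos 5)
Your proposal does not match the paper's argument, and more importantly it contains a genuine gap: the counting scheme behind Theorem~\ref{thm:main} cannot be ``run forward while recording the excluded $3$-trees'' to extract a small family. That scheme is non-constructive in the relevant sense. It shows that on any fixed $n$-point set the number of \emph{labeled} $3$-trees from $\T_n$ that embed is smaller than $|\T_n|$, so \emph{some} member of $\T_n$ fails --- but which one fails depends on the point set, and the argument gives no control over it. The only family it certifies as having no common support is therefore the set of \emph{all} nonisomorphic planar $3$-trees on $n$ vertices; the paper states exactly this as a corollary at $n=15$, yielding a collection of size $321'776$. Moving to $n=35$ makes this number astronomically larger, not smaller, so your claim that $n=35$ is where ``this extraction yields the smallest collection one can certify'' is backwards. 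The specific value $7'393$ is not reachable by any refinement of the crossing-number counting.

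The paper instead uses an entirely different, explicit construction. It exhibits seven rooted planar $3$-trees $T_1,\dots,T_7$ on eight vertices such that (by exhaustive check over order types) no three of them admit a simultaneous embedding on any eight-point set once the mapping of the outer face is fixed. It then plants one of these seven graphs on each of the six faces of a triangular bipyramid $B$ (five vertices), producing a family $\mathcal G$ of planar $3$-trees on $5+6\cdot 5=35$ vertices. A degree argument shows $\mathcal G$ contains at least $7^6/12>9'804$ pairwise nonisomorphic graphs. On the other side, using the uniqueness of a $3$-tree embedding once the outer face is placed (Lemma~\ref{lem:stacked_unique}), the graphs embeddable on a common $35$-point set fall into $3\cdot 7\cdot 11=231$ bins according to the choice and rotation of the outer face; within a bin the bipyramid is pinned down, each of the remaining five faces of $B$ receives a fixed eight-point subset, and the local obstruction allows at most two of the seven gadgets per face, so each bin holds at most $2^5=32$ graphs. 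Hence at most $231\cdot 32=7'392$ members of $\mathcal G$ can share a point set, and any $7'393$ nonisomorphic members (which exist since $9'805>7'393$) form the desired collection. None of this machinery --- the local seven-graph obstruction, the bipyramid amplification, the bin counting --- appears in your proposal, and without it the stated bound cannot be obtained.
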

To our knowledge these are the best bounds currently known. It is a very
interesting and probably challenging open problem to determine the exact value
of~$\sigma$.

Finally, in Section~\ref{sec:small}, we use a computer program to show that
there exist $n$-universal point sets of size $n$ for all $n\le 10$ and give the
total number of such point sets for each $n$. As a side remark, note that it is
not clear that the property ``there exists an $n$-universal point set of size
$n$'' is monotone in $n$.

\section{Large Universal Point Sets}
\label{sec:uni}

A planar 3-tree is a maximal planar graph obtained by iteratively splitting a
facial triangle into three new triangles with a degree-three vertex, starting
from a single triangle. Since a planar 3-tree is a maximal planar graph, it has
$n$ vertices and $2n-4$ triangular faces and its combinatorial embedding is
fixed up to the choice of the outer face. 

For every integer $n\ge 4$, we define a family $\T_n$ of labeled planar 3-trees
on the set of vertices $[n]:=\{1,\dots,n\}$ as follows:
\begin{enumerate}[(i)]
\item $\T_4$ contains only the complete graph $K_4$,
\item $\T_n$ contains every graph that can be constructed by making the new
vertex $n$ adjacent to the three vertices of one of the $2n-6$ facial triangles
of some $T\in\T_{n-1}$.
\end{enumerate}
We insist on the fact that $\T_n$ is a set of {\em labeled} abstract graphs,
many of which can in fact be isomorphic if considered as abstract (unlabeled)
graphs. We also point out that for $n>4$, the class $\T_n$ does not contain
\emph{all} labeled planar 3-trees on $n$ vertices. For instance, the four graphs
in $\T_5$ are shown in \figurename~\ref{fig:T5}, and there is no graph for which
both Vertex~$1$ and Vertex~$2$ have degree three.

\begin{lemma}\label{lem:tn_size_lower}
  For $n\geq 4$, we have $|\T_n| = 2^{n-4}\cdot
  (n-3)!$.
\end{lemma}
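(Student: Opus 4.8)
The plan is to establish the exact formula by induction on $n$, showing along the way that the natural construction map from pairs (a $3$-tree together with one of its faces) to $\T_n$ is a bijection, so that the relevant cardinalities multiply without any collision. The base case is immediate: $\T_4=\{K_4\}$, so $|\T_4|=1=2^{4-4}\,(4-3)!$.

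For the inductive step I would fix $n\ge 5$, assume $|\T_{n-1}|=2^{n-5}(n-4)!$, and consider the set $\mathcal P_n$ of all pairs $(T,f)$ where $T\in\T_{n-1}$ and $f$ is a facial triangle of $T$. Every $T\in\T_{n-1}$ is a planar $3$-tree on $n-1$ vertices, hence a maximal planar graph, and therefore has exactly $2(n-1)-4=2n-6$ triangular faces; thus $|\mathcal P_n|=(2n-6)\,|\T_{n-1}|$. Next, define $\Phi\colon\mathcal P_n\to\T_n$ by letting $\Phi(T,f)$ be the labeled graph obtained from $T$ by adding the new vertex $n$ and joining it to the three vertices of $f$. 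By rule (ii) in the definition of $\T_n$, the map $\Phi$ is onto $\T_n$. It is also injective: in any $G=\Phi(T,f)$ the vertex $n$ has degree exactly $3$, its neighbourhood $N_G(n)$ equals $f$, and deleting $n$ from $G$ returns $T$, so the pair $(T,f)=(G-n,\,N_G(n))$ is recovered from $G$ alone. Hence $\Phi$ is a bijection and $|\T_n|=|\mathcal P_n|=(2n-6)\,|\T_{n-1}|=2(n-3)\cdot 2^{n-5}(n-4)!=2^{n-4}(n-3)!$, completing the induction. (Unrolling the recursion from $\T_4$ gives the same thing: $|\T_n|=\prod_{k=5}^{n}(2k-6)=2^{n-4}(n-3)!$.)

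I do not expect a serious obstacle here; the one point that deserves care is the well-definedness of the inverse of $\Phi$, i.e.\ the observation that in every member of $\T_n$ the vertex $n$ is the last one inserted by the construction and so has degree exactly $3$, which is what allows $N_G(n)$ to pin down the facial triangle into which it was stacked. Everything else rests on the already-recorded fact that a maximal planar graph on $m$ vertices has $2m-4$ triangular faces, together with a telescoping product.
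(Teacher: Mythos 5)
Your proof is correct and follows essentially the same route as the paper: the recursion $|\T_n|=(2n-6)\,|\T_{n-1}|$ based on the $2n-6$ faces of a maximal planar graph on $n-1$ vertices, unrolled to the telescoping product $2^{n-4}(n-3)!$. The only difference is that you explicitly verify that distinct pairs $(T,f)$ yield distinct labeled graphs (via the degree-$3$ vertex $n$ and $G-n=T$), a point the paper's one-line proof leaves implicit.
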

\begin{proof}
  By definition, $|\T_4|=1$. Every graph in $\T_n$ is constructed by splitting
  one of the $2n-6$ faces of a graph in $\T_{n-1}$. We therefore have:
  \[
  \hspace*{3cm}|\T_n|%
  = |\T_{n-1}|\cdot (2n-6)%
  = 4 \cdot 6 \cdot \ldots \cdot (2n - 6)%
  = 2^{n-4}\cdot (n-3)!.\hspace*{3cm}\qed
  \]
\end{proof}

\begin{figure}[htbp]
  \centering%
  \includegraphics[scale=\figscale]{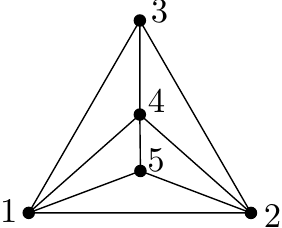}\hfill%
  \includegraphics[scale=\figscale]{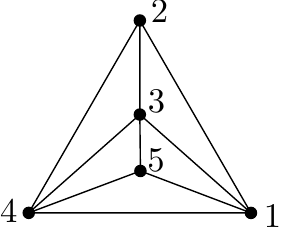}\hfill%
  \includegraphics[scale=\figscale]{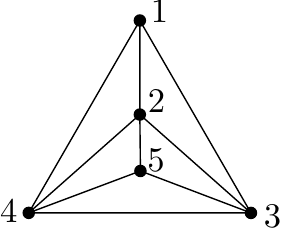}\hfill%
  \includegraphics[scale=\figscale]{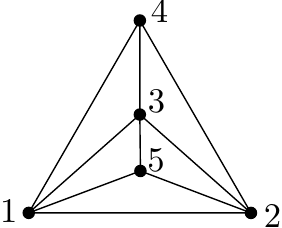}
  \caption{\label{fig:T5}The four planar 3-trees in $\T_5$, with vertex set $\{1,2,3,4,5\}$.}
\end{figure}

\begin{lemma}
  \label{lem:labeled_points_unique_stacked}
  Given a set $P=\{p_1,\dots,p_n\}$ of labeled points in the plane and a
  bijection $\pi: [n] \to P$, there is \emph{at most one} $T\in \T_n$ such that
  $\pi$ is a plane straight-line embedding of $T$.
\end{lemma}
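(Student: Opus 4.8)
The plan is to argue by induction on $n$. The base case $n=4$ is immediate, since $\T_4=\{K_4\}$ is a singleton. For the inductive step, suppose that one and the same bijection $\pi$ realises plane straight-line embeddings of two graphs $T,T'\in\T_n$; the goal is to deduce $T=T'$, which gives the claimed ``at most one''.

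The first step is to peel off vertex $n$. By the recursive definition of $\T_n$, vertex $n$ is introduced last, joined to exactly three vertices, and never touched again; hence $\deg_T(n)=\deg_{T'}(n)=3$, and the graphs $S:=T-n$ and $S':=T'-n$ obtained by deleting $n$ lie in $\T_{n-1}$. Since a subdrawing of a plane drawing is plane, $\pi|_{[n-1]}$ is a plane straight-line embedding of $S$, and likewise of $S'$. Applying the induction hypothesis to the bijection $\pi|_{[n-1]}$ (which embeds both $S$ and $S'$) shows that at most one element of $\T_{n-1}$ is embedded this way, so $S=S'$.

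It remains to identify the three neighbours of $n$. Here I would use that $S$ is a \emph{maximal} planar graph, so in the plane drawing $D$ of $S$ induced by $\pi|_{[n-1]}$ every face --- including the outer face --- is bounded by a triangle. The point $\pi(n)$ lies in the interior of exactly one face $F$ of $D$: it cannot lie on a vertex or an edge of $S$, because $\pi$ is a valid plane embedding of $T\supseteq S$. The three edges of $T$ at $n$ are straight segments leaving $\pi(n)$ that cross no edge of $D$, so each of their other endpoints is a vertex of $S$ on the boundary of $F$. As $\partial F$ has only three vertices while $n$ has three distinct neighbours, $\pi$ maps the neighbour set of $n$ in $T$ onto the vertex set of $\partial F$ --- and exactly the same reasoning applies to $T'$. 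Hence $n$ has the same neighbours in $T$ and in $T'$, so $E(T)=E(S)\cup\{\,\{n,u\}:u\in N_T(n)\,\}=E(S')\cup\{\,\{n,u\}:u\in N_{T'}(n)\,\}=E(T')$, i.e.\ $T=T'$.

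The one genuinely substantive step is the appeal to the induction hypothesis forcing $S=S'$: once the two $(n-1)$-vertex subgraphs coincide, all remaining freedom is local at $\pi(n)$, and the unique face of $D$ containing that point determines the last three edges. What needs a little care is only the planarity bookkeeping --- that $\pi(n)$ avoids the drawing of $S$, and that a segment $\pi(n)\pi(u)$ meets $\partial F$ solely at $\pi(u)$ --- but both facts are immediate consequences of the definition of a plane straight-line embedding (distinct points for vertices, and no two edges meeting except at a common endpoint, which in particular forbids an edge from passing through a non-incident vertex). I anticipate no obstacle beyond spelling out these incidences.
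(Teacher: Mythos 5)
Your proof is correct and is essentially the paper's argument run in the opposite direction: the paper builds the drawing forward from the $K_4$ on $\{1,2,3,4\}$ and observes that the triangular region containing each successive point $p_{i+1}$ forces the three neighbours of vertex $i+1$, while you induct backward by deleting the degree-three vertex $n$ and use the same geometric fact that the face of the triangulated drawing containing $\pi(n)$ determines its three neighbours. Both rest on the identical key observation, so this is the same proof in substance.
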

\begin{proof}
  Consider any such labeled point set $P$ and assume without loss of
  generality that $\pi(i)=p_i$ for all $i$. In all $T\in\T_n$ the
  vertices $\{1,2,3,4\}$ form a $K_4$. Hence, for all $T$, the
  straight-line embedding $\pi$ connects all pairs of points in
  $\{p_1,p_2,p_3,p_4\}$ with line segments. If these points are in
  convex position, there is a crossing and there is no $T\in\T_n$ for
  which $\pi$ is a plane straight-line embedding
  (\figurename~\ref{fig:crossingornot}). Otherwise, there is a unique
  graph $K_4\in\T_4$ for which $p_1,p_2,p_3,p_4$ is a plane
  straight-line drawing. We proceed as follows.

  \begin{figure}[htbp]
    \hfil%
    \includegraphics[scale=\figscale]{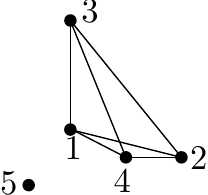}\hfil%
    \includegraphics[scale=\figscale]{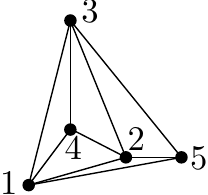}\hfil%
    \caption{\label{fig:crossingornot}Some permutations of a given point set do
      not define any planar 3-tree in $\T_n$, because they generate a crossing
      (left). On the other hand, when no such crossing occurs, the permutation
      defines a unique planar 3-tree in $\T_n$ (right). At any rate, a single
      permutation can be associated with at most one planar 3-tree in $\T_n$.}
  \end{figure}

  Given a plane straight-line drawing for some graph $T_i\in\T_i$ on the first
  $i\ge 4$ points, the next point $p_{i+1}$ is located in some triangular
  region of the drawing; denote this region by $p_{a_i}p_{b_i}p_{c_i}$. Only if
  during the construction of $T$ we decided to connect the next vertex $i+1$ to
  exactly the vertices $a_i,b_i,c_i$, there is no crossing introduced by
  mapping $i+1$ to $p_{i+1}$. (An edge to any other vertex would cross one of
  the bounding edges of the triangle $p_{a_i}p_{b_i}p_{c_i}$.) In other words,
  for every $i\ge 5$ the role of vertex $i$ is completely determined. If no
  crossing is ever introduced, this process determines exactly one graph
  $T\in\T_n$ for which $\pi$ forms a plane straight-line embedding. (Note that
  a crossing can be introduced only if $p_{i+1}$ is located outside of the
  convex hull of $\{p_1,\ldots,p_i\}$. And also in that case there need not be
  a crossing, as the example in \figurename~\ref{fig:crossingornot} (right)
  shows.)\qed
\end{proof}

\noindent We use the following theorem by \'{A}brego and Fern\'{a}ndez-Merchant.
\begin{theorem}[\cite{AF05}]\label{t:cn}
  Every plane straight-line drawing of the complete graph $K_n$ has at
  least $\frac 1 4%
  \left\lfloor\frac{n }{2}\right\rfloor%
  \left\lfloor\frac{n-1}{2}\right\rfloor%
  \left\lfloor\frac{n-2}{2}\right\rfloor%
  \left\lfloor\frac{n-3}{2}\right\rfloor$ crossings.
\end{theorem}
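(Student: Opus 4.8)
The plan is to prove the equivalent statement that the rectilinear crossing number $\overline{\mathrm{cr}}(K_n)$ --- the minimum, over all sets $P$ of $n$ points in general position, of the number of crossings in the straight-line drawing of $K_n$ on $P$ --- is at least the stated quantity. The starting observation is that crossings and convex quadrilaterals coincide: four points in convex position contribute exactly one crossing (their two diagonals meet), whereas four points with one inside the triangle of the other three contribute none. Hence the number of crossings on $P$ equals the number $q(P)$ of $4$-subsets of $P$ in convex position, and it suffices to bound $q(P)$ from below. A first useful reformulation is the double count $q(P)=\binom{n}{4}-\sum_{p\in P} t_p$, where $t_p$ is the number of triangles spanned by the other $n-1$ points that contain $p$, since every non-convex $4$-subset is accounted for exactly once by its unique interior point.

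To control this I would pass to the combinatorics of $k$-edges. Call a directed segment $\overrightarrow{pq}$ with $p,q\in P$ a \emph{$j$-edge} if exactly $j$ of the remaining points lie strictly to its left, write $e_j(P)$ for the number of $j$-edges, and set $E_{\le k}(P)=\sum_{j=0}^{k} e_j(P)$ for the number of \emph{$(\le k)$-edges}. The key structural tool is the exact identity of Lov\'asz--Vesztergombi--Wagner--Welzl expressing the crossing number as a nonnegative combination of the $(\le k)$-edge counts,
\[
\overline{\mathrm{cr}}(P)=\sum_{k=0}^{\lfloor n/2\rfloor-2}(n-2k-3)\,E_{\le k}(P)\;+\;(\text{lower-order correction}),
\]
where the correction is an explicit function of $n$ with a small dependence on the parity of $n$, and where every coefficient $n-2k-3$ is nonnegative throughout this range. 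I would derive this identity by re-expressing the point-in-triangle counts $\sum_p t_p$ through the $k$-edge statistics obtained from a rotating directed line.

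The heart of the argument --- and the step I expect to be the main obstacle --- is the lower bound on $(\le k)$-edges,
\[
E_{\le k}(P)\ \ge\ 3\binom{k+2}{2}\qquad\bigl(0\le k\le \lfloor n/2\rfloor-1\bigr).
\]
For $k=0$ this is merely the statement that the convex hull has at least three edges, and the general case I would extract from the \emph{allowable (circular) sequence} of $P$: as a directed line rotates through angle $\pi$, the order of the orthogonal projections of the points evolves by adjacent transpositions, each swap of a pair with exactly $j$ points to one side being precisely a $j$-edge, and each pair transposing exactly once over the half-turn. Tracking how early the transpositions involving the extreme elements are forced to occur, and summing these unavoidable swaps, yields $3\binom{k+2}{2}$, with the constant $3$ tracing back to the base case $k=0$. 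Establishing this inequality carefully --- avoiding double counting and handling the boundary values of $k$ --- is the technically delicate part of the whole proof.

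Finally I would combine the pieces. Since all coefficients $n-2k-3$ are nonnegative in the summation range, substituting $E_{\le k}(P)\ge 3\binom{k+2}{2}$ into the identity is legitimate and gives
\[
\overline{\mathrm{cr}}(P)\ \ge\ \sum_{k=0}^{\lfloor n/2\rfloor-2}(n-2k-3)\,\tfrac{3}{2}(k+1)(k+2)\;+\;(\text{correction}).
\]
This is a routine but somewhat tedious polynomial summation; treating the even and odd cases for $n$ separately, it collapses to the closed form $\tfrac14\lfloor n/2\rfloor\lfloor (n-1)/2\rfloor\lfloor (n-2)/2\rfloor\lfloor (n-3)/2\rfloor$, which is exactly the claimed lower bound on the number of crossings.
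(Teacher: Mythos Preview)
The paper does not prove this theorem at all: it is quoted verbatim from \'Abrego and Fern\'andez-Merchant~[AF05] and used as a black box to derive Corollary~\ref{cor:convex_4_tuples}. So there is no ``paper's own proof'' to compare against.

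That said, your sketch is a faithful outline of the original argument in~[AF05] (and, independently, of Lov\'asz--Vesztergombi--Wagner--Welzl). The three pillars---identifying crossings with convex $4$-subsets, the summation identity expressing $\overline{\mathrm{cr}}(P)$ as a nonnegative combination of the $(\le k)$-edge counts $E_{\le k}(P)$, and the inequality $E_{\le k}(P)\ge 3\binom{k+2}{2}$ proved via allowable sequences---are exactly the ingredients used there, and you have located the genuine difficulty correctly in the $(\le k)$-edge bound. The only place your write-up is loose is the ``lower-order correction'' in the crossing identity: in the actual proof this is an explicit term (essentially $-\tfrac{3}{4}\binom{n}{3}$ up to a parity adjustment), and one needs it written out precisely for the final summation to collapse to the closed form. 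If you intend to include a self-contained proof, make that term explicit and carry out the even/odd summations; otherwise, citing~[AF05] as the present paper does is entirely appropriate.
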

Note that for $n\le 4$ at least one of the floor expressions is zero, whereas
for $n=5$ the theorem states that every straight-line drawing of $K_5$ has at
least one crossing. Any pair of crossing edges corresponds to a four-tuple of
points in convex position. Using this interpretation we can easily derive a
floor-free lower bound on the number of convex four-gons contained in every
planar point set.
\begin{corollary}\label{cor:convex_4_tuples}
  Given a point set $P\subset\R^2$ of $n$ points in general position,
  more than a $\frac{3}{8}\cdot\frac{n-4}{n}$-fraction of all four
  element subsets of $P$ is in convex position.
\end{corollary}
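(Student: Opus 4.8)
The plan is to interpret the quantity to be bounded directly in terms of crossings in a rectilinear drawing of $K_n$, and then invoke Theorem~\ref{t:cn}. First I would fix the rectilinear drawing $D$ of $K_n$ whose vertices are the points of $P$ and whose edges are the $\binom{n}{2}$ straight-line segments spanned by $P$; since $P$ is in general position, no three of these segments share an interior point and no vertex lies on a non-incident segment. For any four points of $P$ there are exactly three ways to pair them into two segments, and one of these three pairings consists of crossing segments precisely when the four points are in convex position (when one of the four lies inside the triangle formed by the other three, none of the three pairings crosses). Hence every crossing of $D$ is produced by a unique four-element subset of $P$ in convex position, and every such convex subset produces exactly one crossing of $D$; consequently the number of crossings of $D$ equals the number of four-element subsets of $P$ in convex position.

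Next I would apply Theorem~\ref{t:cn} to the drawing $D$: the number of convex four-element subsets of $P$ is at least $\frac14\lfloor n/2\rfloor\lfloor(n-1)/2\rfloor\lfloor(n-2)/2\rfloor\lfloor(n-3)/2\rfloor$. To eliminate the floors I would use $\lfloor m/2\rfloor\ge(m-1)/2$ for every integer $m$, with equality exactly when $m$ is odd. Among the four consecutive integers $n,n-1,n-2,n-3$ (and we may assume $n\ge5$, as the statement is vacuous or trivial otherwise) exactly two are even, so exactly two of the four factors satisfy the inequality strictly; therefore the product is strictly larger than $\tfrac{1}{16}(n-1)(n-2)(n-3)(n-4)$, and the number of convex four-element subsets of $P$ is strictly larger than $\tfrac{1}{64}(n-1)(n-2)(n-3)(n-4)$.

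Finally I would divide by the total number of four-element subsets, $\binom n4=\tfrac{1}{24}n(n-1)(n-2)(n-3)$. The resulting ratio simplifies to $\tfrac{24}{64}\cdot\tfrac{n-4}{n}=\tfrac38\cdot\tfrac{n-4}{n}$, which is exactly the claimed bound. I do not expect any genuine obstacle: the only two points requiring care are (i) the equivalence between a crossing in $D$ and a convex four-element subset, which is a standard consequence of general position, and (ii) keeping track of where the strict inequality originates, namely the two even factors among $n,n-1,n-2,n-3$.
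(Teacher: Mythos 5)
Your proof is correct and follows essentially the same route as the paper: identify crossings of the straight-line drawing of $K_n$ with convex four-element subsets, apply Theorem~\ref{t:cn}, replace the floor product by the floor-free lower bound $\tfrac{1}{16}(n-1)(n-2)(n-3)(n-4)$, and divide by $\binom{n}{4}$. The only cosmetic difference is that you remove the floors by the termwise bound $\lfloor m/2\rfloor\ge (m-1)/2$ (strict for the two even factors), whereas the paper splits into cases according to the parity of $n$; both yield the same strict inequality for $n\ge 5$, and the degenerate small cases are immaterial for the application.
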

\begin{proof}
  By Theorem~\ref{t:cn} at least %
  $c = \frac 1 4%
  \left\lfloor\frac{n }{2}\right\rfloor%
  \left\lfloor\frac{n-1}{2}\right\rfloor%
  \left\lfloor\frac{n-2}{2}\right\rfloor%
  \left\lfloor\frac{n-3}{2}\right\rfloor$ %
  four element subsets of $P$ are in convex position. For $n$ odd we
  have
  \[c=\frac{1}{4}\left(\frac{n-1}{2}\right)^2\left(\frac{n-3}{2}\right)^2\]
  and for $n$ even we have
  \[c=\frac{1}{4}\left(\frac{n}{2}\right)\left(\frac{n-2}{2}\right)^2\left(\frac{n-4}{2}\right)\]
  and so
  \begin{eqnarray*}
    c &>&
    \frac{1}{4}\left(\frac{n-1}{2}\right)\left(\frac{n-2}{2}\right)\left(\frac{n-3}{2}\right)\left(\frac{n-4}{2}\right)\\
    &=&
    \frac{3}{8}\cdot\frac{n-4}{n}\cdot\frac{n(n-1)(n-2)(n-3)}{4\cdot
      3\cdot 2}\\
    &=&
    \frac{3}{8}\cdot\frac{n-4}{n}\cdot\binom{n}{4}\,,
  \end{eqnarray*}
  for all $n$.\qed
\end{proof}
We will use this fact to prove the following lemma.
\begin{lemma}\label{lem:tn_on_point_set}
  On any set $P\subset\R^2$ of $n\ge 4$ points fewer than
  $\frac{1}{8}(5n+12)(n-1)!$ graphs from $\T_n$ admit a plane
  straight-line embedding.
\end{lemma}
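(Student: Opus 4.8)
The plan is to bound the number of embeddable graphs in $\T_n$ by the number of bijections $[n]\to P$ that realise some such graph, and then to estimate the latter by discarding every bijection that maps the clique $\{1,2,3,4\}$ onto four points in convex position.

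First I would dispose of the trivial case $n=4$, where $|\T_4|=1<24$, so assume $n\ge 5$; I would also assume $P$ is in general position, perturbing it otherwise, since a sufficiently small perturbation keeps every plane straight-line embedding of a graph in $\T_n$ plane and hence only increases the count. Call a bijection $\pi\colon[n]\to P$ \emph{good} if it is a plane straight-line embedding of some $T\in\T_n$. By Lemma~\ref{lem:labeled_points_unique_stacked} a good $\pi$ is an embedding of exactly one $T\in\T_n$, while every $T\in\T_n$ that admits such an embedding on $P$ is realised by at least one good bijection, and distinct graphs are realised by disjoint sets of good bijections. Hence the number of graphs in $\T_n$ that admit a plane straight-line embedding on $P$ is at most the number of good bijections, and it suffices to bound the latter from above.

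The key observation is that every $T\in\T_n$ contains the $K_4$ on $\{1,2,3,4\}$: if $\pi(1),\pi(2),\pi(3),\pi(4)$ are in convex position, then the two diagonals of that quadrilateral cross, so $\pi$ is a plane straight-line embedding of no graph in $\T_n$ and is therefore not good. It remains to count these excluded bijections. Each $4$-element subset $S\subseteq P$ arises as $\{\pi(1),\pi(2),\pi(3),\pi(4)\}$ for exactly $4!\,(n-4)!$ bijections $\pi$, and $\binom n4\cdot 4!\,(n-4)! = n!$, so by Corollary~\ref{cor:convex_4_tuples} more than $\tfrac38\cdot\tfrac{n-4}{n}\cdot n!$ bijections are excluded. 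Consequently the number of good bijections, and hence the number of embeddable graphs in $\T_n$, is fewer than
\[
n! - \frac38\cdot\frac{n-4}{n}\cdot n! \;=\; n!\cdot\frac{8n-3(n-4)}{8n} \;=\; \frac{5n+12}{8n}\cdot n! \;=\; \frac18(5n+12)(n-1)!,
\]
which is exactly the claimed estimate.

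I do not expect a genuine obstacle. Once the problem is recast as counting bijections rather than graphs, everything reduces to Lemma~\ref{lem:labeled_points_unique_stacked}, the convex-$K_4$ observation, Corollary~\ref{cor:convex_4_tuples}, and the identity $\binom n4\cdot 4!\,(n-4)! = n!$ followed by a one-line simplification. The only steps needing a little care are verifying that the counting inequality points the right way — that every embeddable graph contributes a nonempty block of good bijections and that these blocks are pairwise disjoint — and the routine reduction to general position.
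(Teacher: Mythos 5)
Your proposal is correct and follows essentially the same route as the paper: bound the embeddable graphs by good bijections via Lemma~\ref{lem:labeled_points_unique_stacked}, exclude every bijection sending $\{1,2,3,4\}$ to a convex quadruple, and apply Corollary~\ref{cor:convex_4_tuples} with the same arithmetic. Your explicit $4!\,(n-4)!$ count and the separate treatment of $n=4$ are just slightly more careful versions of steps the paper leaves implicit.
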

\begin{proof}
  Let $P\subset\R^2$ be a set of $n$ points and denote by
  $\F_n\subseteq\T_n$ the set of labeled planar $3$-trees from $\T_n$
  that admit a plane straight-line embedding onto $P$. Note that a
  straight-line embedding can be represented by a permutation $\pi$ of
  the points of $P$, where each vertex $i$ is mapped to point
  $\pi(i)$. Let $S_n$ be the set of all permutations of $P$. We define
  a map $\psi : \F_n \to S_n$ by assigning to each $T\in\F_n$ some
  $\psi(T)\in S_n$ such that $\psi(T)$ is a plane straight-line
  embedding of $T$ (such an embedding exists by definition of $\F_n$).

  By Lemma~\ref{lem:labeled_points_unique_stacked}, every permutation
  $\pi\in S_n$ is a plane straight-line embedding of \emph{at most
    one} $T\in \F_n$. It follows that $\psi$ is a injection, and hence
  $\psi:\F_n\to\Pi$, with $\Pi=\mathrm{Im}(\psi)$, is a bijection and so  
  $|\F_n|=|\Pi|\le|S_n|=n!$.

  Next we can quantify the difference between $\Pi$ and $S_n$ using
  Corollary~\ref{cor:convex_4_tuples}. Note that the general position
  assumption is not a restriction, since in case of collinearities, a
  slight perturbation of the point set yields a new point set that
  still admits all plane straight-line drawings of the original point
  set. Consider a permutation $\pi=p_1,\ldots,p_n$ such that
  $p_1,p_2,p_3,p_4$ form a convex quadrilateral. As argued in the
  first paragraph of the proof of
  Lemma~\ref{lem:labeled_points_unique_stacked}, $\pi$ is not a plane
  straight-line embedding for any $T\in\F_n$. It follows that $\pi\in
  S_n\setminus\Pi$. We know from Corollary~\ref{cor:convex_4_tuples}
  that more than a fraction of $(3/8)\cdot(n-4)/n$ of the 4-tuples of
  $P$ are in convex position and therefore a corresponding fraction of
  all permutations does \emph{not} correspond to a plane straight-line
  drawing. So we can bound the number of possible labeled plane
  straight-line drawings by
  \[
  \hspace*{4.25cm}|\Pi|<\left(1-\frac{3}{8} \cdot \frac{n-4}{n}\right) n!=
  \frac{1}{8}(5n+12)(n-1)!\,.\hspace*{4.25cm}\qed
  \]
\end{proof}

\begin{proof}[of Theorem~\ref{thm:main}]
  Consider an $n$-universal point set $P\subset\R^2$ with
  $|P|=n$. Being universal, in particular $P$ has to accommodate all
  graphs from $\T_n$. By Lemma~\ref{lem:tn_size_lower}, there are
  exactly $2^{n-4}\cdot (n-3)!$ graphs in $\T_n$, whereas by
  Lemma~\ref{lem:tn_on_point_set} no more than
  $\frac{1}{8}(5n+12)(n-1)!$ graphs from $\T_n$ admit a plane
  straight-line drawing on $P$. Combining both bounds we obtain
  $2^{n-1}\le(5n+12)(n-1)(n-2)$. Setting $n=15$ yields
  $2^{14}=16'384\le 87\cdot 14\cdot 13=15'834$, which is a
  contradiction and so there is no $15$-universal set of $15$ points.

  For $n=14$ the inequality reads $2^{13}=8'192\le 82\cdot 13\cdot
  12=12'792$ and so there is no indication that there cannot be a
  $14$-universal set of $14$ points. To prove the claim for any
  $n>15$, consider the two functions $f(n)=2^{n-1}$ and
  $g(n)=(5n+12)(n-1)(n-2)$ that constitute the inequality. As $f$ is
  exponential in $n$ whereas $g$ is just a cubic polynomial, $f$
  certainly dominates $g$, for sufficiently large $n$.  Moreover, we
  know that $f(15)>g(15)$. Noting that $f(n)/f(n-1)=2$ and $g(n)>0$,
  for $n>2$, it suffices to show that $g(n)/g(n-1)<2$, for all $n\ge
  16$.  

  \noindent We can bound
  \begin{eqnarray*}
    \frac{g(n)}{g(n-1)} &=&
    \frac{(5n+12)(n-1)(n-2)}{(5(n-1)+12)(n-2)(n-3)}=
    \frac{(5n+12)(n-1)}{(5n+7)(n-3)}\\
    &<& \frac{(5n+15)n}{5n(n-3)}=\frac{n+3}{n-3},
  \end{eqnarray*}
  which is easily seen to be upper bounded by two, for $n\geq 9$.\qed
\end{proof}

\section{Simultaneous Geometric Embeddings}
\label{sec:sim}

The number of non-isomorphic planar 3-trees on $n$ vertices was computed by
Beineke and Pippert~\cite{BP74}, and appears as sequence A027610 on Sloane's
Encyclopedia of Integer Sequences. For $n=15$, this number is $321'776$.  Hence
we can also phrase our result in the language of simultaneous
embeddings~\cite{bcdeeiklm-spge-07}.
\begin{corollary}
  There is a collection of $321'776$ planar graphs that do not admit a
  simultaneous (plane straight-line) embedding without mapping.
\end{corollary}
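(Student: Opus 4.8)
The plan is to take for $\mathcal G$ the collection of all pairwise non-isomorphic planar $3$-trees on $15$ vertices. By the Beineke--Pippert enumeration (sequence A027610) there are exactly $321'776$ of these, all having $15$ vertices, so it remains to show that no set of $15$ points admits a plane straight-line embedding of every graph in $\mathcal G$ simultaneously; I would derive this from the proof of Theorem~\ref{thm:main} rather than rerun the counting argument from scratch.

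First I would suppose, for contradiction, that some $P\subset\R^2$ with $|P|=15$ simultaneously accommodates all of $\mathcal G$. The crucial step is to observe that $P$ then also admits a plane straight-line embedding of \emph{every labeled} graph in $\T_{15}$. Indeed, each $T\in\T_{15}$ is, as an abstract graph, obtained from $K_4$ by repeated vertex stacking, hence a planar $3$-tree on $15$ vertices, and so isomorphic to some $G\in\mathcal G$. Fixing a graph isomorphism $\iota\colon V(T)\to V(G)$ and a plane straight-line embedding $\psi\colon V(G)\to P$, the composition $\psi\circ\iota\colon V(T)\to P$ is again a bijection, it draws exactly the same set of segments as $\psi$ (merely reindexed), and is therefore crossing-free; since $V(T)=[15]$ and $|P|=15$, this is precisely a plane straight-line embedding of the labeled graph $T$ onto $P$ in the sense of Lemma~\ref{lem:labeled_points_unique_stacked}. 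Thus, in the notation of Lemma~\ref{lem:tn_on_point_set}, we would have $\F_{15}=\T_{15}$.

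This is already the contradiction. By Lemma~\ref{lem:tn_size_lower}, $|\T_{15}|=2^{11}\cdot 12!$, whereas Lemma~\ref{lem:tn_on_point_set} bounds the number of graphs of $\T_{15}$ embeddable on $P$ by strictly less than $\tfrac18(5\cdot 15+12)\cdot 14!$; these two bounds are incompatible, which is exactly the arithmetic carried out in the proof of Theorem~\ref{thm:main} at $n=15$ (where $2^{14}=16'384>15'834=87\cdot 14\cdot 13$). Hence no such $P$ exists, and $\mathcal G$ is a collection of $321'776$ planar graphs with no simultaneous plane straight-line embedding without mapping. (In passing this yields $\sigma<321'776$, weaker than Theorem~\ref{thm:sim}.)

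The only delicate point — and the only place where this is more than a verbatim restatement of Theorem~\ref{thm:main} — is the labeled/unlabeled bridge in the second paragraph: $\T_{15}$ is a \emph{proper} subfamily of the labeled planar $3$-trees on $15$ vertices, so one has to check that accommodating all \emph{unlabeled} planar $3$-trees already forces all members of $\T_{15}$ to embed. This is precisely where $|V(T)|=|P|$ is used, so that an embedding is a bijection that can be pulled back along a graph isomorphism; everything else is the estimate already established above.
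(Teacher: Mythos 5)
Your proof is correct and follows exactly the route the paper intends: the corollary is stated as an immediate rephrasing of Theorem~\ref{thm:main}, whose proof shows that no $15$-point set accommodates all of $\T_{15}$, and the only additional content is the labeled/unlabeled bridge, which you handle correctly (an embedding of an unlabeled representative pulled back along a graph isomorphism is an embedding of the labeled graph, since $|V(T)|=|P|=15$). The arithmetic you quote ($2^{14}=16'384>15'834$) matches the paper's.
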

In the following we will give an explicit construction for a much smaller family
of graphs that not admit a simultaneous embedding without mapping. As a first
observation, note that the freedom to select the outer face is essential in
order to embed graphs onto a given point set. In fact, for planar 3-trees, the
mapping for the outer face is the only choice there is. We prove this in two
steps.
\begin{lemma}
  \label{lem:stacked_any_face}
  Let $G$ be a labeled planar 3-tree on the vertex set $[n]$, for $n\ge 3$, and
  let $C$ denote any triangle in $G$. Then $G$ can be constructed starting from
  $C$ by iteratively inserting a degree-three vertex into some facial triangle
  of the partial graph constructed so far.
\end{lemma}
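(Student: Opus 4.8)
The plan is to induct on $n=|V(G)|$, allowing $G$ to range over planar $3$-trees on an arbitrary $n$-element vertex set (so that deleting a vertex keeps us inside the inductive framework). For $n=3$ the graph $G$ is a single triangle and must equal $C$, so there is nothing to insert. For $n\ge 4$ the idea is to locate a degree-three vertex $v$ of $G$ with $v\notin C$, delete it, apply the induction hypothesis to $G-v$ together with the \emph{same} triangle $C$, and then re-insert $v$ as the final stacking step.

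To make this work I would first record two structural facts about planar $3$-trees, each of which follows by a routine induction along the stacking construction. (1)~If $v$ is any degree-three vertex of a planar $3$-tree $G$ on $n\ge 4$ vertices, then $G-v$ is again a planar $3$-tree and the three neighbors of $v$ bound a triangular face of $G-v$: removing $v$ merges the three faces incident to it into the face bounded by the triangle on its neighborhood. (2)~Every planar $3$-tree on $n\ge 5$ vertices contains two \emph{non-adjacent} degree-three vertices. For~(2), writing $G$ as the result of stacking a new vertex $w$ into a facial triangle $xyz$ of a planar $3$-tree $T$ on $n-1\ge 4$ vertices, one observes that $T$ contains a degree-three vertex outside $\{x,y,z\}$ (directly if $T=K_4$; otherwise via the induction hypothesis, using that $x,y,z$ are pairwise adjacent so at most one of $T$'s two guaranteed degree-three vertices lies among them), and such a vertex keeps degree three in $G$ and is non-adjacent to $w$.

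Given these facts, a degree-three vertex $v\notin C$ always exists: for $n=4$ we have $G=K_4$ and the unique vertex off $C$ works, while for $n\ge 5$ the three vertices of $C$ are pairwise adjacent, so they cannot contain both of the non-adjacent degree-three vertices supplied by~(2). Now delete $v$. Since $v\notin C$, the triangle $C$ still sits inside $G-v$, which by~(1) is a planar $3$-tree on $n-1$ vertices; by the induction hypothesis it can be built from $C$ by successively stacking degree-three vertices into facial triangles. By~(1) the neighborhood of $v$ is a triangular face of $G-v$, so appending the single step that stacks $v$ into that face recovers $G$ exactly, which closes the induction.

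The main obstacle is fact~(2): guaranteeing a degree-three vertex that avoids the prescribed triangle. The subtlety is that a planar $3$-tree can have only two degree-three vertices (think of a ``stacked path''), so one genuinely has to use that these two are non-adjacent whereas the vertices of $C$ are mutually adjacent. The remaining ingredients — fact~(1) and the $n=3,4$ degeneracies, where $C$ is at once the outer boundary and the only place to stack — are straightforward bookkeeping.
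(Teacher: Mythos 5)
Your proof is correct, but it takes a genuinely different route from the paper's. The paper keeps the construction sequence $1,\dots,n$ that defines $G$ and performs surgery on it: it looks at the largest-labelled vertex $w$ of $C$, observes that $w$ was stacked into a face containing the other two vertices $u,v$ of $C$ together with a fourth vertex $x$, and then reorders the sequence as $u,v,w,x,\dots$, recursing on the prefix graph $G_{w-1}$ with the triangle $u,v,x$ to fill in the middle before resuming with $w+1,\dots,n$. You instead use the standard simplicial-vertex paradigm for $k$-trees: peel off a degree-three vertex that avoids $C$ and recurse on $G-v$. The two auxiliary facts you isolate are exactly what this requires, and both are true with the inductions you indicate; in particular fact~(2) (two non-adjacent degree-three vertices once $n\ge 5$) is the planar-3-tree analogue of Dirac's theorem on simplicial vertices in chordal graphs, and your use of the pairwise adjacency of $C$ to conclude that at most one of them lies on $C$ is the right and necessary observation, since a ``stacked path'' really does have only two degree-three vertices. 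The trade-off: your argument needs these two structural lemmas up front but then the inductive step is completely clean (delete, recurse, re-stack into the face left behind); the paper's argument needs no structural facts about degree-three vertices but must verify that the spliced sequence $u,v,w,x,S^-,w+1,\dots,n$ is still a valid stacking order, i.e.\ that every face used by $S^-$ and by the tail $w+1,\dots,n$ survives the reordering --- a point the paper treats rather tersely. Both are sound; yours is arguably the more standard and more robust decomposition.
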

\begin{proof}
  We prove the statement by induction on $n$. For $n=3$ there is nothing to
  show. Hence let $n>3$. By definition $G$ can be constructed iteratively from
  \emph{some} triangle in the way described. Without loss of generality suppose
  that adding vertices in the order $1,2,\ldots,n$ yields such a construction
  sequence. Denote by $G_i$ the graph that is constructed by the sequence
  $1,\ldots,i$, for $1\le i\le n$.

  Let $C=u,v,w$ such that $u<v<w$. Consider the graph $G_w$: In the last step,
  $w$ is added as a new vertex into some facial triangle $T$ of $G_{w-1}$. As
  $w$ is a neighbor of both $u$ and $v$ in $G$, both $u$ and $v$ are vertices of
  $T$; denote the third vertex of $T$ by $x$. Note that all of $u,v,w$ and
  $u,w,x$ and $v,x,w$ are facial triangles in $G_w$.

  If $w=4$, then exchanging the role of $w$ and $x$ yields a construction
  sequence $u,v,w,x,5,\ldots,n$ for $G$, as claimed. If $w>4$, then $c_1,c_2,v$
  is a separating triangle in $G_w$. By the inductive hypothesis we can obtain a
  construction sequence $S$ for $G_{w-1}$ starting with the triangle
  $u,v,x$. The desired sequence for $G$ is obtained as
  $u,v,w,x,S^-,w+1,\ldots,n$, where $S^-$ is the suffix of $S$ that excludes the
  starting triangle $u,v,x$.\qed
\end{proof}
And now we can prove the desired property:
\begin{lemma}
  \label{lem:stacked_unique}
  Given a labeled planar 3-tree $G$ on vertex set $[n]$, a triangle
  $c=c_1c_2c_3$ in $G$, and a set $P\subset \R^2$ of $n$ points with
  $p_1,p_2,p_3\in P$, there is at most one way to complete the partial
  embedding $\{c_1\mapsto p_1, c_2\mapsto p_2, c_3\mapsto p_3\}$ to a plane
  straight-line embedding of $G$ on $P$.
\end{lemma}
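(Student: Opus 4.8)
The plan is to mimic the structure of the proof of Lemma~\ref{lem:labeled_points_unique_stacked}, but with the starting triangle $c$ playing the role that the canonical $K_4$ on $\{1,2,3,4\}$ played there. The key enabling fact is Lemma~\ref{lem:stacked_any_face}: although $G$ was defined via \emph{some} construction sequence, we may assume it has a construction sequence that starts from the given triangle $c=c_1c_2c_3$. So fix such a sequence and relabel the vertices so that $c_1,c_2,c_3$ are inserted first (in that order) and the remaining vertices are inserted one at a time, each as a degree-three vertex into a facial triangle of the current partial graph.

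Now suppose we have fixed the partial embedding $c_1\mapsto p_1$, $c_2\mapsto p_2$, $c_3\mapsto p_3$. I would argue by induction along the construction sequence that the image of every further vertex is forced. Let $G_i$ be the partial 3-tree after inserting the first $i$ vertices of the sequence, and suppose we have already shown that any plane straight-line embedding of $G$ extending the given partial embedding must map $G_i$ to a uniquely determined point set $\{q_1,\dots,q_i\}\subseteq P$ in a uniquely determined way. The next vertex, call it $v$, is inserted into some facial triangle $abc$ of $G_i$ and is adjacent in $G$ to exactly $a,b,c$ (and to no other vertex of $G_i$). In any plane straight-line embedding of $G$, the image of $v$ must lie in some face of the already-drawn $G_i$; but since $v$ is adjacent to $a$, $b$, and $c$ and to no other vertex of $G_i$, the segments from the image of $v$ to the images of $a,b,c$ must avoid all edges of $G_i$, which forces the image of $v$ to lie inside the triangle spanned by the images of $a$, $b$, $c$ (this is exactly the argument in Lemma~\ref{lem:labeled_points_unique_stacked}: an edge to any other vertex would cross a bounding edge of that triangle, and placing $v$ outside that triangle would force such a crossing or a crossing with its boundary). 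Since $P$ is finite and — after the harmless perturbation argument already used in Lemma~\ref{lem:tn_on_point_set} — in general position, at most one point of $P$ lies in that open triangle, so the image of $v$ is forced. If no point of $P$ lies there, no extension exists at all, which is still consistent with the "at most one" claim. This completes the induction, and after $n-3$ steps the entire embedding of $G$ on $P$ is determined.

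One subtlety I would address explicitly: I should confirm that the perturbation argument is legitimate here, i.e.\ that replacing $P$ by a nearby point set in general position does not create new completions of the partial embedding — but since the partial embedding already pins down $p_1,p_2,p_3$ and a generic perturbation only destroys, never creates, plane straight-line embeddings, any completion over $P$ survives perturbation, so "at most one over the perturbed set" implies "at most one over $P$." The main obstacle, and the only place real care is needed, is the geometric claim in the induction step: that the image of the new vertex $v$ is confined to the open triangle on $a,b,c$. I would justify this carefully using the planarity of $G_i$'s drawing and the fact that $abc$ is a \emph{facial} triangle of $G_i$ (so its interior is a face of the drawing containing no other vertices), together with $v$'s adjacency list being exactly $\{a,b,c\}$ within $G_i$; everything else is routine bookkeeping along the construction sequence.
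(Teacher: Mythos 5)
Your overall strategy --- invoke Lemma~\ref{lem:stacked_any_face} to obtain a construction sequence starting from the triangle $c$, then argue by induction along that sequence that the image of each newly inserted vertex is forced --- is exactly the paper's strategy. However, the induction step, which you yourself flag as ``the only place real care is needed,'' rests on a false claim. You assert that since $P$ is in general position (after perturbation), \emph{at most one point of $P$ lies in the open triangle} spanned by the images of $a,b,c$, so the image of $v$ is forced. General position only forbids collinear triples; it says nothing about how many points of $P$ lie inside a triangle spanned by three of its points. Indeed, the facial triangle $abc$ of the partial graph $G_i$ must, in any completion to a plane embedding of all of $G$, contain the images of \emph{all} vertices of $G$ that are combinatorially inside $abc$ --- which can be as many as $n-i$ points. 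Already for $n=5$ with $c$ the outer triangle and two stacked interior vertices, the first inserted vertex sees both remaining points of $P$ inside $p_1p_2p_3$, and your argument gives no way to choose between them.

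The missing idea is a counting argument. The vertices of $G$ lying inside the face $jk\ell$ into which $i$ is inserted are partitioned by the three cycles $ijk$, $ik\ell$, $i\ell j$ into groups of sizes $n_1,n_2,n_3$, and any valid placement of $i$ must be at a point $p_i$ such that the triangles $p_ip_jp_k$, $p_ip_kp_\ell$, $p_ip_\ell p_j$ contain exactly $n_1$, $n_2$, $n_3$ points of $P$ respectively (otherwise some subgraph cannot be accommodated). Uniqueness then follows because any second candidate $p_i''$ would lie strictly inside one of the three sub-triangles determined by the first candidate $p_i'$, say $p_i'p_jp_k$, whence $p_i''p_jp_k$ contains strictly fewer points than $p_i'p_jp_k$ --- contradicting that both candidates realize the same count $n_1$. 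Without this (or an equivalent) argument your induction step does not go through; note also that the paper's proof needs no perturbation or general-position hypothesis at all, so that part of your write-up is a red herring rather than a fix.
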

\begin{proof}
  We use Lemma~\ref{lem:stacked_any_face} to relabel the vertices in such a way
  that $c_1,c_2,c_3$ becomes $1,2,3$ and the order $1,\dots,n$ is a construction
  sequence for $G$. Embed vertices $1,2,3$ onto $p_1,p_2,p_3$. We iteratively
  embed the remaining vertices as follows. Vertex $i$ was inserted into some
  face $jk\ell$ during the construction given by
  Lemma~\ref{lem:stacked_any_face}. Note that $j,k,\ell$ have already been
  embedded on points $p_j,p_k,p_\ell$. The vertices contained in the triangle
  $jk\ell$ (except $i$) are partitioned into three sets by the cycles $ijk$
  ($n_1$ vertices) and $ik\ell$ ($n_2$ vertices) and $i\ell j$ ($n_3$
  vertices). We want to embed $i$ on a point $p_i$ such that $p_ip_jp_k$
  contains exactly $n_1$ points, $p_ip_kp_\ell$ contains exactly $n_2$ points
  and $p_ip_{\ell}p_j$ contains exactly $n_3$ points. Note that it is necessary
  to embed $i$ on a point with this property: if some triangle has too few
  points, then it will not be possible to embed the subgraph of $G$ enclosed by
  the corresponding cycle there. It remains to show that there is always at most
  one choice for $p_i$. Suppose that there are two candidates for $p_i$, say
  $p_i'$ and $p_i''$. Then $p_i''$ must be contained in $p_i'p_jp_k$ or
  $p_i'p_kp_\ell$ or $p_i'p_{\ell}p_j$ (or vice versa). Without loss of
  generality, let it be contained in $p_i'p_jp_k$: now $p_i''p_jp_k$ contains
  fewer points than $p_i'p_jp_k$, which is a contradiction. The lemma follows by
  induction.\qed
\end{proof}
Therefore it is not surprising that it is very easy to find three graphs that do
not admit a simultaneous (plane straight-line) embedding without mapping, if the
mapping for the outer face is specified for each of them.
\begin{figure}[htbp]
  \hfil%
  \includegraphics[scale=\figscale]{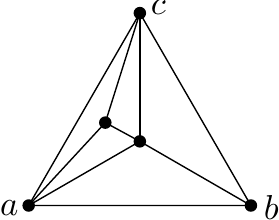}\hfil%
  \includegraphics[scale=\figscale]{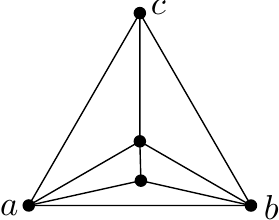}\hfil%
  \includegraphics[scale=\figscale]{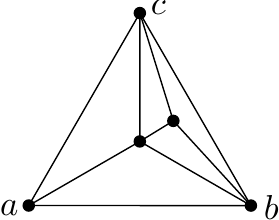}\hfil%
  \caption{\label{fig:three}Three planar graphs that do not admit a simultaneous
    geometric embedding with a fixed mapping for the outer face.}
\end{figure}
\begin{lemma}\label{prop:three}
  There is no set $P\subset\R^2$ of five points with convex hull $p_a,p_b,p_c$
  such that every graph shown in \figurename~\ref{fig:three} has a (plane
  straight-line) embedding on $P$ where the vertices $a$, $b$ and $c$ are mapped
  to the points $p_a$, $p_b$ and $p_c$, respectively.
\end{lemma}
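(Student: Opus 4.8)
The plan is to analyze the five-point set $P$ with triangular convex hull $p_a, p_b, p_c$ and two interior points, say $p_d$ and $p_e$. Since the convex hull is a triangle, the two interior points and the three hull points admit only a small number of distinct order types, so first I would enumerate the possible configurations: either the segment $p_dp_e$ sees the hull vertices so that one of $p_d, p_e$ lies inside the triangle formed by the other together with two hull vertices, or the two interior points are ``independent'' in the sense that the line through them separates the hull vertices $1$--$2$ or $2$--$1$. Up to relabeling the hull points, this yields essentially two combinatorial types, and for each type I would record, for every triangle with vertices among $\{p_a,p_b,p_c,p_d,p_e\}$, how many of the remaining points it contains. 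This containment data is exactly what governs embeddability by Lemma~\ref{lem:stacked_unique}.

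Next I would use Lemma~\ref{lem:stacked_unique}: once the outer face of each of the three graphs in \figurename~\ref{fig:three} is pinned to $p_a p_b p_c$, there is \emph{at most one} way to extend to a plane straight-line embedding on $P$, and moreover the argument in that proof tells us precisely where the remaining two vertices must go — each newly stacked vertex must land in the unique point whose three sub-triangles contain the correct number of leftover points. So for each of the three graphs I would trace through its construction sequence (starting from the triangle $abc$) and read off the forced positions of the two interior vertices $d$ and $e$, expressed as a constraint on the containment pattern of $P$. The key point is that the three graphs are designed so that their three stacking patterns impose three \emph{mutually incompatible} requirements on how the two interior points of $P$ must be nested inside the hull triangle: e.g.\ one graph forces $p_d$ to be the point whose ``first'' sub-triangle is empty, another forces it to be the point whose ``second'' sub-triangle is empty, and so on, and no single order type of five points can satisfy all of these simultaneously.

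Concretely, I expect the proof to proceed by contradiction: assume such a $P$ exists. Apply Lemma~\ref{lem:stacked_unique} to each graph to deduce the (unique, forced) images of the two non-hull vertices. From the first graph deduce one relation between $p_d$, $p_e$ and the hull (say, $p_e$ lies in a particular sub-triangle determined by $p_d$); from the second graph deduce a second relation; combine these two to essentially fix the order type of $P$; then check that the third graph cannot be embedded on that order type, because its required containment counts are violated. Since there are only two order types to consider, one can equally well just go through both order types directly and exhibit, for each, one of the three graphs that fails — whichever presentation is cleaner. Either way the verification is a finite case check.

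The main obstacle is the bookkeeping: one must be careful about the correspondence between the abstract vertices $a,b,c$ of each graph (which may sit in different positions in each graph's construction sequence) and the geometric points, and about the symmetries of $P$ — the two interior points may be swappable and the three hull points permuted, so one has to make sure the three graphs genuinely rule out \emph{all} assignments, not just one. Since the figures fix the three specific graphs, the cleanest route is to name the at most two order types of $P$ explicitly, and for each order type and each of the (at most $3! \cdot 2$) labelings consistent with the hull condition, point to the graph whose forced interior placement (via Lemma~\ref{lem:stacked_unique}) contradicts the available containment counts. This is routine once the order types are enumerated, but it is the step where an error is easiest to make, so it deserves the bulk of the written proof, ideally accompanied by the figure already present.
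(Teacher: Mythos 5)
Your strategy --- enumerate the order types of five points with a triangular hull and, for each graph, use Lemma~\ref{lem:stacked_unique} to read off the forced images of the two interior vertices --- would succeed if carried out, but as written it has a genuine gap: the decisive step, namely exhibiting \emph{why} the three graphs impose ``mutually incompatible'' requirements, is only asserted and then deferred as routine bookkeeping. That incompatibility is the entire content of the lemma, so a proof must actually produce it; you never identify the structure of the three graphs or point to the configuration-versus-graph conflict. (A minor slip along the way: for five points there is exactly \emph{one} order type with a triangular convex hull, not two --- the three order types of five points are distinguished by hull size --- so what multiplies your cases is only the labeling of the hull vertices and of the two interior points.)

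The paper's own proof shows how little enumeration is needed. Each graph in \figurename~\ref{fig:three} has a central vertex adjacent to all of $a$, $b$, $c$, with the fifth vertex stacked into a \emph{different} one of the three faces around it. So the central vertex must go to an interior point $p$ such that $p,p_a,p_b,p_c$ are not in convex position and the single remaining point lies in a prescribed one of the three triangles $p p_a p_b$, $p p_b p_c$, $p p_c p_a$ --- a different prescription for each of the three graphs. A fixed interior point satisfies exactly one of these three conditions, since the other interior point lies in exactly one of its three sub-triangles; hence the three graphs require three pairwise distinct choices for $p$, while only two interior points exist. This pigeonhole replaces the whole case analysis. If you do pursue the enumeration route instead, you must still, for each labeled configuration, name the graph whose forced placement fails --- which is precisely the verification your write-up omits.
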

\begin{proof}
  The point $p$ for the central vertex that is connected to all of $a,b,c$ must
  be chosen so that (i) it is not in convex position with $p_a$, $p_b$ and $p_c$
  and (ii) the number of points in the three resulting triangles is one in one
  triangle and zero in the other two. That requires three distinct choices for
  $p$, but there are only two points available.\qed
\end{proof}
In fact, there are many such triples of graphs. The following lemma can be
verified with help of a computer program that exhaustively checks all order
types. Point set order types~\cite{gp-ms-83} are a combinatorial abstraction of
planar point sets that encode the orientation of all point triples, which in
particular determines whether or not any two line segments cross. For a small
number of points, there is a database with realizations of every (realizable)
order type~\cite{ak-psotd-01}.
\begin{lemma}\label{prop:seven}
  There is no set $P\subset\R^2$ of eight points with convex hull $p_a,p_b,p_c$
  such that every graph shown in \figurename~\ref{fig:seven} has a (plane
  straight-line) embedding on $P$ where the vertices $a$, $b$ and $c$ are mapped
  to the points $p_a$, $p_b$ and $p_c$, respectively.
\end{lemma}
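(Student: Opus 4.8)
The plan is to mirror the structure of Lemma~\ref{prop:three}, but now with eight points and a correspondingly richer family of planar 3-trees, so that a single point set $P$ with triangular convex hull $p_a,p_b,p_c$ is forced to satisfy too many incompatible ``point-count'' constraints at once. First I would observe that by Lemma~\ref{lem:stacked_unique}, once the outer face of a planar 3-tree is pinned to $p_a,p_b,p_c$, the entire embedding is determined (if it exists at all), and its existence is governed purely by how the remaining five points of $P$ distribute among the nested triangles dictated by the construction sequence. Thus, for each graph $G_i$ in \figurename~\ref{fig:seven}, embeddability on $P$ with the prescribed outer face is equivalent to a combinatorial condition on the order type of $P$: at each stacking step, the subtriangle counts $(n_1,n_2,n_3)$ prescribed by $G_i$ must match the number of points of $P$ lying in the three subtriangles of the chosen interior point.

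The key step is then to exhibit the explicit collection of graphs in \figurename~\ref{fig:seven} and argue that no order type on eight points (with a triangular hull) simultaneously realizes all of them. Concretely, I would select the graphs so that their first interior vertex demands, across the family, every possible way of splitting the five interior points as an ordered triple $(n_1,n_2,n_3)$ with $n_1+n_2+n_3=5$ --- or at least enough of these splittings that no fixed placement of the five interior points can be consistent with all of them, and moreover that iterating the argument to the second and third interior vertices closes off the remaining cases. Since the number of order types on eight points is small and tabulated in the database~\cite{ak-psotd-01}, the cleanest justification is the one the paper already advertises: a computer program enumerates all order types of eight points with exactly three extreme points, and for each one checks (using Lemma~\ref{lem:stacked_unique} to make the check deterministic) that at least one graph in \figurename~\ref{fig:seven} fails to embed with outer face $p_a,p_b,p_c$. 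I would present the human-readable half of the argument --- the reduction to order-type conditions and the counting obstruction on interior points --- and defer the finite case analysis to the verified computation.

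The main obstacle I anticipate is not any single deep step but rather pinning down the \emph{right} finite family: it must be large enough that the incompatibility is genuine (a naive attempt with only the analogue of the three graphs from Lemma~\ref{prop:three}, now stacked one level deeper, may still be simultaneously embeddable on some clever eight-point configuration), yet structured enough that the obstruction admits a clean description in terms of the $(n_1,n_2,n_3)$ splittings. A secondary subtlety is ensuring the constraints compose correctly across successive stacking steps: after the first interior vertex fixes a coarse partition of the five remaining points, the second and third vertices impose finer constraints inside one of the subtriangles, and I need the chosen graphs to exploit this recursion so that even point sets surviving the first-level count are eliminated at the second level. Once the family is fixed, the remaining work --- translating embeddability into order-type predicates and running the exhaustive check --- is routine given Lemma~\ref{lem:stacked_unique} and the order-type database.
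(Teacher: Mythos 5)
Your proposal matches the paper's approach: the paper gives no human-readable argument for this lemma either, and justifies it exactly as you suggest---by an exhaustive computer check over all order types of eight points from the database~\cite{ak-psotd-01}, with Lemma~\ref{lem:stacked_unique} making the embeddability test for each fixed outer-face mapping deterministic. Your additional remarks on the $(n_1,n_2,n_3)$ splitting constraints are a plausible heuristic for why the family of seven graphs works, but, as in the paper, the actual proof rests on the finite verification.
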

\begin{figure}[htbp]
  \hfil%
  \subfloat[$T_1$]{\includegraphics[scale=1.1]{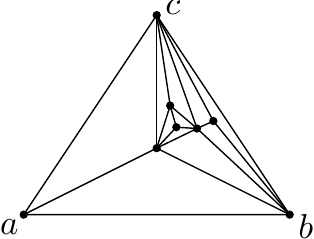}}\hfil%
  \subfloat[$T_2$]{\includegraphics[scale=1.1]{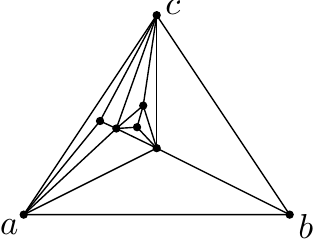}}\hfil%
  \subfloat[$T_3$]{\includegraphics[scale=1.1]{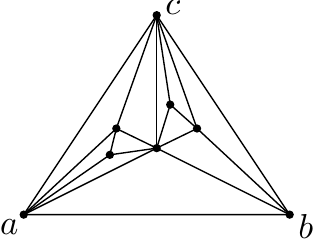}}\hfil%
  \subfloat[$T_4$]{\includegraphics[scale=1.1]{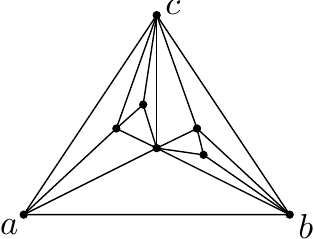}}\hfil\\
  \hspace*{1pt}\hfil%
  \subfloat[$T_5$]{\includegraphics[scale=1.1]{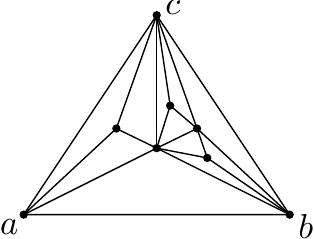}}\hfil%
  \subfloat[$T_6$]{\includegraphics[scale=1.1]{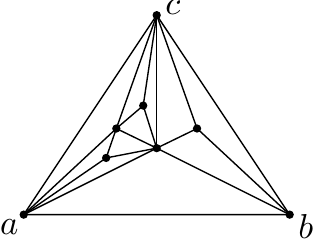}}\hfil%
  \subfloat[$T_7$]{\includegraphics[scale=1.1]{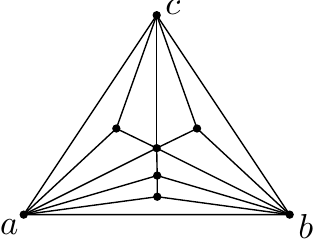}}\hfil%
  \subfloat[$B$]{\includegraphics[scale=1.1]{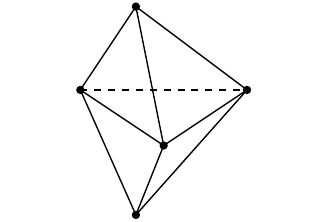}\hfil}\\%
  \caption{\label{fig:seven}{\small (a)--(g)}: Seven planar graphs, no three of
    which admit a simultaneous geometric embedding with a fixed mapping for the
    outer face; {\small (h)}: the skeleton $B$ of a triangular bipyramid.}
\end{figure}
Denote by $\mathcal{T}=\{T_1,\ldots,T_7\}$ the family of seven graphs on eight
vertices depicted in \figurename~\ref{fig:seven}. We consider these graphs as
abstract but \emph{rooted} graphs, that is, one face is designated as the outer
face and the counterclockwise order of the vertices along the outer face (the
\emph{orientation} of the face) is $a,b,c$ in each case. Observe that all graphs
in $\mathcal{T}$ are planar 3-trees.

Using $\mathcal{T}$ we construct a family $\mathcal{G}$ of graphs as follows.
Start from the skeleton $B$ of a triangular bipyramid, that is, a triangle and
two additional vertices, each of which is connected to all vertices of the
triangle. The graph $B$ has five vertices and six faces and it is a planar
3-tree.

We obtain $\mathcal{G}$ from $B$ by planting one of the graphs from
$\mathcal{T}$ onto each of the six faces of $B$. Each face of $B$ is a
(combinatorial) triangle where one vertex has degree three (one of the pyramid
tips) and the other two vertices have degree four (the vertices of the starting
triangle). On each face $f$ of $B$ a selected graph $T$ from $\mathcal{T}$ is
planted by identifying the three vertices bounding $f$ with the three vertices
bounding the outer face of $T$ in such a way that vertex $c$ (which appears at
the top in \figurename~\ref{fig:seven}) is mapped to the vertex of degree three
(in $B$) of $f$. In the next paragraph, we will see why we do not have to
specify how $a$ and $b$ are matched to $f$. The family $\mathcal{G}$ consists of
all graphs on $5+6\cdot 5=35$ vertices that can be obtained in this way. By
construction all these graphs are planar 3-trees. Therefore by
Lemma~\ref{lem:stacked_unique} on any given set of $35$ points, the plane
straight-line embedding is unique (if it exists), once the mapping for the outer
face is determined.

Observe that $\mathcal{T}$ is \emph{flip-symmetric} with respect to horizontal
reflection. In other (more combinatorial) words, for every $T\in\mathcal{T}$ we
can exchange the role of the bottom two vertices $a$ and $b$ of the outer face
(and thereby also its orientation) to obtain a graph that is also in
$\mathcal{T}$. The graphs form symmetric pairs of siblings $(T_1,T_2)$,
$(T_3,T_4)$, $(T_5,T_6)$, and $T_7$ flips to itself. Therefore, regardless of
the orientation in which we plant a graph from $\mathcal{T}$ onto a face of $B$,
we obtain a graph in $\mathcal{G}$, and so $\mathcal{G}$ is well-defined.

Next, we give a lower bound on the number of nonisomorphic graphs in
$\mathcal{G}$.
\begin{lemma}
  The family $\mathcal{G}$ contains at least $9'805$ pairwise nonisomorphic
  graphs.
\end{lemma}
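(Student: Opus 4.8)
The plan is to count the isomorphism classes in $\mathcal{G}$ as orbits of a small symmetry group. Fix, for each of the six faces of $B$, a rule deciding which of its two degree-four vertices plays the role of $a$ and which plays the role of $b$ (say, by vertex label). A \emph{configuration} is then a map $\phi$ assigning to each of the six faces of $B$ one of the seven graphs in $\mathcal{T}=\{T_1,\dots,T_7\}$; there are $7^6=117'649$ configurations, each produces a graph $G_\phi\in\mathcal{G}$, and by the flip-symmetry of $\mathcal{T}$ every graph of $\mathcal{G}$ arises this way. The automorphism group $\mathrm{Aut}(B)$ has order $12$: an automorphism permutes the three degree-four vertices and the two degree-three apices among themselves, and conversely every such map preserves $B$, so $|\mathrm{Aut}(B)|=3!\cdot 2=12$. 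This group acts on configurations, permuting the faces and, on those faces where the induced map reverses the chosen $a/b$-rule, replacing $\phi(f)$ by its flip-sibling in $\mathcal{T}$; configurations in one orbit obviously give isomorphic graphs. It therefore suffices to prove the converse: if $G_\phi\cong G_{\phi'}$ then $\phi$ and $\phi'$ lie in the same orbit. Granting this, the isomorphism classes in $\mathcal{G}$ are precisely the $\mathrm{Aut}(B)$-orbits of configurations; since every orbit has size at most $12$, there are at least $117'649/12>9'804$ of them, hence at least $9'805$.

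To prove the converse I would recover the copy of $B$ inside an arbitrary $G\in\mathcal{G}$ from its abstract graph structure alone. The crucial observation is that every vertex in the interior of a planted copy $T_i$ has all of its neighbours inside the eight-element vertex set of that copy. It follows that any separating triangle of $G$ other than the ``equatorial'' triangle $\{A,B,C\}$ spanned by the three degree-four vertices of $B$ has all three of its vertices inside the vertex set of a single planted copy, and hence cuts off a component of at most five vertices; by contrast, deleting $\{A,B,C\}$ splits $G$ into exactly two components (one around each apex), each of sixteen vertices. Hence $\{A,B,C\}$ is the unique separating triangle all of whose sides have more than five vertices, the two apices are then the only vertices adjacent to all of $\{A,B,C\}$, and the interior of each planted copy is recovered as a small component left after deleting the corresponding face of $B$. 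Consequently any isomorphism $G_\phi\to G_{\phi'}$ restricts to an isomorphism between these two copies of $B$, inducing some $g\in\mathrm{Aut}(B)$, and carries the planted copy on each face $f$ to the planted copy on the face $g(f)$; since the graphs of $\mathcal{T}$ are pairwise non-isomorphic as rooted graphs — flip-siblings being the only coincidence, which is exactly the $a/b$-reversal already folded into the group action — this forces $\phi'=g\cdot\phi$.

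The step I expect to demand the most care is the characterization of $\{A,B,C\}$ as the only ``large'' separating triangle: one has to dispense with every other separating triangle, in particular those using two equatorial vertices together with one interior vertex of a planted copy (here the neighbourhood observation forces all three of their vertices into a single copy) and those running along a face of $B$ (which cut off precisely the five interior vertices of one copy). The remaining ingredient — that $T_1,\dots,T_7$ are pairwise non-isomorphic as rooted graphs — is a finite check that can be read off \figurename~\ref{fig:seven}. Putting the two parts together yields at least $9'805$ pairwise non-isomorphic graphs in $\mathcal{G}$, as claimed.
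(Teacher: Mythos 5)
Your proposal is correct and arrives at the bound by the same counting scheme as the paper: $7^6$ face-labeled assignments, each isomorphism class meeting at most $12$ of them, hence at least $\lceil 7^6/12\rceil = 9'805$ classes. The one place where you genuinely diverge is the key structural step, namely showing that any isomorphism between two members of $\mathcal{G}$ must carry the embedded copy of $B$ to itself (so that only the $12$ symmetries of the bipyramid can account for coincidences). You recover $B$ via the separating-triangle structure: every separating triangle other than the equatorial one lies inside a single planted copy and therefore cuts off at most five vertices, which singles out $\{A,B,C\}$ and then the two apices as the only common neighbours of all three. The paper instead uses a two-line degree argument: after planting, every vertex of $B$ has degree at least eight (the tips at least twelve), while every interior vertex of a planted $T_i$ has degree at most seven, so the five vertices of $B$ are exactly the high-degree vertices and any isomorphism preserves them. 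Both arguments are sound; the degree count is shorter and avoids the case analysis of separating triangles that you correctly flag as the delicate part of your route, while your argument is somewhat more robust (it would survive replacing the $T_i$ by gadgets with a high-degree interior vertex). One small credit to your write-up: you make explicit that the $T_1,\dots,T_7$ must be pairwise non-isomorphic as rooted graphs (with flips accounted for) in order for orbits of configurations to coincide with isomorphism classes; the paper leaves this finite check implicit in the claim that fixing the image of one face determines the whole isomorphism.
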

\begin{proof}
  Consider the bipyramid $B$ as a face-labeled object. There are $7^6$ different
  ways to assign a graph from $\mathcal{T}$ to each of the six now
  distinguishable faces. Denote this class of face-labeled graphs by
  $\mathcal{F}$. For many of these assignments the corresponding graphs are
  isomorphic if considered as abstract (unlabeled) graphs. However, the
  following argument shows that every isomorphism between two such graphs maps
  the vertex set of $B$ to itself.

  The two tips of $B$ have degree three and are incident to three faces. Onto
  each of the faces one graph from $\mathcal{T}$ is planted, which increases the
  degree by four (for $T_1,\ldots,T_6$) or three (for $T_7$) to a total of at
  least twelve. The three triangle vertices start with degree four and are
  incident to four faces. Every graph from $\mathcal{T}$ planted there adds at
  least one more edge, to a total degree of at least eight. But the highest
  degree among the interior vertices of the graphs in $\mathcal{T}$ is seven,
  which proves the claim.

  Hence we have to look for isomorphisms only among the symmetries of the
  bipyramid $B$. The tips are distinguishable from the triangle vertices,
  because the former are incident to three high degree vertices, whereas the
  latter are incident to four high degree vertices. Selecting the mapping for
  one face of $B$ determines the whole isomorphism. Since there are at most two
  ways to map a face to a face (we can select the mapping for the two non-tip
  vertices, that is, the orientation of the triangle), every graph in
  $\mathcal{F}$ is isomorphic to at most $2\cdot 6=12$ graphs from
  $\mathcal{F}$. It follows that there are at least $7^6/12>9'804$ pairwise
  nonisomorphic graphs in $\mathcal{G}$.\qed
\end{proof}
We now give an upper bound on the number of graphs of $\mathcal{G}$ that can be
simultaneously embedded on a common point set.
\begin{lemma}
  At most $7'392$ pairwise nonisomorphic graphs of $\mathcal{G}$ admit a
  simultaneous (plane straight-line) embedding without mapping.
\end{lemma}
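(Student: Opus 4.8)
The plan is to fix an arbitrary point set $P\subset\R^2$ with $|P|=35$, let $\mathcal{G}_P\subseteq\mathcal{G}$ be the graphs of $\mathcal{G}$ that admit a plane straight-line embedding on $P$, and bound the number of isomorphism classes occurring in $\mathcal{G}_P$ by $7'392$. Writing $G_\gamma$ for the graph obtained by assigning a graph of $\mathcal{T}$ to each face of $B$ according to $\gamma$, it suffices to bound the number of assignments $\gamma$ with $G_\gamma\in\mathcal{G}_P$, since $\gamma\mapsto[G_\gamma]$ is onto the set of isomorphism classes in $\mathcal{G}_P$. Two reductions come first. Every graph of $\mathcal{G}$ is maximal planar, so its outer face is a triangle; hence if the convex hull of $P$ has more than three extreme points no graph of $\mathcal{G}$ embeds on $P$, and we may assume that $P$ has exactly three extreme points $q_1,q_2,q_3$. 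Also, by the degree estimates in the proof of the preceding lemma, every vertex of $B$ has degree at least eight in every $G\in\mathcal{G}$ while every other vertex has degree at most seven, so $V(B)$ is recognizable inside $G$ as its set of vertices of degree at least eight.

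Now fix $G_\gamma\in\mathcal{G}_P$ together with a plane straight-line embedding $\eta$ of $G_\gamma$ on $P$. Restricting $\eta$ to $B$ yields a plane straight-line drawing of $B$ on five of the points of $P$; it subdivides the plane into six regions, one per face of $B$, of which exactly one, say $R^{\circ}$ with underlying face $f^{\circ}$, is unbounded. For each of the remaining five faces $f$ of $B$, the region $R_f$ is the interior of the triangle whose corners carry the three vertices of $f$, and the planted copy $T^{(f)}:=\gamma(f)$ occupies $R_f$ with its five interior vertices strictly inside $R_f$; hence $R_f$ contains exactly those five points of $P$ and no others. Consequently the restriction of $\eta$ to $T^{(f)}$ is a plane straight-line embedding of a graph of $\mathcal{T}$ on eight points whose convex hull is the triangle $f$, and the bijection from the outer triangle $a,b,c$ of $T^{(f)}$ to the corners of $f$ is forced by $\eta|_B$: the vertex $c$ goes to the degree-three vertex of $f$ in $B$, and the cyclic order of $a,b,c$ is determined by the side of $f$ on which $R_f$ lies. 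By the property of $\mathcal{T}$ recorded in \figurename~\ref{fig:seven} --- no three of $T_1,\dots,T_7$ admit a simultaneous plane straight-line embedding on a common set of eight points with a prescribed mapping of the outer face --- the graph $T^{(f)}$ can only be one of at most two graphs of $\mathcal{T}$, and which two they are depends only on the five points carrying $V(B)$ and the combinatorial drawing of $B$ on them, not on the remaining values of $\gamma$.

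This calls for the following bookkeeping. Call a \emph{frame} the choice of five points of $P$ together with a combinatorial plane drawing of $B$ on them such that each of the five bounded regions contains exactly five of the remaining thirty points (so that $R^{\circ}$ carries the other five). Every pair $(G_\gamma,\eta)$ as above produces a frame, and by the previous paragraph, once the frame is fixed, $\gamma$ is restricted to one of only two prescribed graphs of $\mathcal{T}$ on each of the five bounded faces, while the single unbounded face $f^{\circ}$ is unconstrained among the seven. Hence at most $7\cdot 2^{5}=224$ assignments $\gamma$, and therefore at most that many isomorphism classes, are compatible with any given frame. It then remains to bound the number of frames (or, more economically, the number of distinct constrained assignments that arise across all frames) and to verify that the resulting total does not exceed $7'392$.

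That last step is the main obstacle and the technical heart of the argument: the only global information about $P$ is that it consists of thirty-five points with a triangular convex hull, so one must determine, by a detailed combinatorial-geometric analysis, in how many essentially different ways $B$ can be drawn on such a set with all six of its regions carrying five points, and how many new isomorphism classes each such drawing can contribute once the $12$-fold automorphism group of $B$ is accounted for. This is where the slack between the clean per-frame bound $7\cdot 2^{5}$ and the stated value $7'392$ is absorbed, and it is the part I expect to be the most delicate --- possibly relying, like Lemma~\ref{prop:seven}, on a finite case check carried out by computer.
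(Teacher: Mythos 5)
There is a genuine gap, and you have in fact located it yourself: your argument bottoms out at ``(number of frames)\,$\times\,224$'' and you have no bound on the number of frames. A frame in your sense is a choice of five of the $35$ points plus a plane drawing of $B$ on them with five points in each region, and nothing you have said prevents a single point set $P$ from supporting very many such frames; the speculation that this must be settled by a computer search is not how the bound is obtained. The paper's key idea, which your proposal is missing, is to index the bins \emph{combinatorially by the location of the outer face inside the abstract graph} rather than geometrically by the drawing of $B$: every face of a graph in $\mathcal{G}$ is one of the $11$ interior faces of one of the $7$ graphs of $\mathcal{T}$, and the mapping of that face to the convex hull of $P$ admits $3$ rotations (no reflection factor, since a reflection corresponds by flip-symmetry to a different graph or a different face), giving $7\cdot 11\cdot 3=231$ bins. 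Lemma~\ref{lem:stacked_unique}, together with the facts that every face of $B$ encloses exactly five vertices and that $B$ is face-transitive, then shows that \emph{all graphs in one bin induce the same frame}, i.e.\ the vertices of $B$ land on the same five points. So the frames never need to be counted at all; within each bin the copy of $\mathcal{T}$ containing the outer face is already fixed and Lemma~\ref{prop:seven} (in its ``no three'' form) caps each of the remaining five faces at two choices, giving $231\cdot 2^5=7'392$.

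A secondary inaccuracy: your per-frame factor $7\cdot 2^5$ treats the planted copy on the unbounded region as unconstrained, which is both wasteful and symptomatic of the indexing problem --- in the paper's binning that copy is part of the bin label, so it contributes to the $231$ and not to the $32$. The rest of your setup (reduction to a triangular hull, recognizing $V(B)$ by degrees, each bounded face of $B$ receiving exactly five points, and the forced mapping of $a,b,c$ to the corners of each face) matches the paper and is sound.
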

\begin{proof}
  Consider a subset $\mathcal{G}'\subseteq \mathcal{G}$ of pairwise
  nonisomorphic graphs and a point set $P$ that admits a simultaneous embedding
  of $\mathcal{G}'$. Since $\mathcal{G}'$ is a class of maximal planar graphs,
  the convex hull of $P$ must be a triangle. For each $G\in\mathcal{G}'$ we can
  select an outer face $f(G)$ and a mapping $\pi(G)$ for the vertices bounding
  $f(G)$ to the convex hull of $P$ so that the resulting straight-line
  embedding, which by Lemma~\ref{lem:stacked_unique} is completely determined by
  $f(G)$ and $\pi(G)$, is plane.

  Let us group the graphs from $\mathcal{G}'$ into bins, according to the maps
  $f$ and $\pi$. For $f$, there are $7\cdot 11$ possible choices: one of the
  eleven faces of one of the seven graphs in $\mathcal{T}$. For $\pi$ there are
  three choices: one of the three possible rotations to map the face chosen by
  $f$ to the convex hull of $P$. Note that regarding $\pi$ there is no
  additional factor of two for the orientation of the face, because by
  flip-symmetry such a change corresponds to a different graph (for
  $T_1,\ldots,T_6$) or a different face of the graph (for $T_7$), that is, a
  different choice for $f$. Altogether this yields a partition of $\mathcal{G}'$
  into $3\cdot 77=231$ bins.

  The crucial observation (and ultimate reason for this subdivision) is that for
  all graphs in a single bin the vertices of $B$ (the bipyramid) are mapped to
  the same points. This is a consequence of the uniqueness of the embedding up
  to the mapping for the outer face (Lemma~\ref{lem:stacked_unique}), which is
  identical for all graphs in the same bin. Therefore, the triangle $t$ of $B$
  in which the outer face is located is mapped to the same oriented triple of
  points in $P$ for all graphs in the same bin. From there the pattern repeats,
  noting that every face of $B$ contains the same number of points (five) and
  that the polyhedron $B$ is face-transitive so that there is no difference as
  to which face of $B$ was selected to contain the outer face.

  It follows that for all graphs in the same bin the graphs from $\mathcal{T}$
  planted onto the faces of $B$ are mapped to the same point sets. Any two
  (nonisomorphic) graphs from $\mathcal{G}'$ differ in at least one of those
  faces -- and by definition not in the one in which the outer face was selected
  by $f$. In order for the graphs in a bin to be simultaneously embeddable on
  $P$, by Lemma~\ref{prop:seven} there are at most two different graphs from
  $\mathcal{T}$ mapped to any of the remaining five faces of $B$. Therefore
  there cannot be more than $2^5=32$ graphs from $\mathcal{G}'$ in any bin.
  Hence $|\mathcal{G}'|\leq 231\cdot 32 = 7'392$, as claimed.\qed
\end{proof}
Since there are strictly more nonisomorphic graphs in $\mathcal{G}$ than can
possibly be simultaneously embedded, not all graphs of $\mathcal{G}$ admit a
simultaneous embedding.  In particular, any subset of $7'392 + 1$ nonisomorphic
graphs in $\mathcal{G}$ is a collection that does not have a simultaneous
embedding. This proves our Theorem~\ref{thm:sim}.

\section{Small $n$-universal point sets}
\label{sec:small}

As we have seen in the previous sections, there are no $n$-universal point sets
of size $n$ for $n\geq 15$. In this section, we consider the case $n<15$.
Specifically, we used a computer program~\cite{universalsrc} to show the
following:
\begin{theorem}
  \label{thm:at_most_10}
  There exist $n$-universal point sets of size $n$ for all $1\leq n\leq 10$.
\end{theorem}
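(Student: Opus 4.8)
The plan is to proceed constructively: for each $n$ with $1 \le n \le 10$, exhibit a concrete set $P_n$ of $n$ points in general position and verify that every planar graph on $n$ vertices embeds on it with straight lines. Since the statement is an existence claim for finitely many small values, a finite computation suffices, and the main work is to make that computation feasible. First I would reduce to maximal planar graphs (triangulations): every planar graph on $n$ vertices is a subgraph of some triangulation on $n$ vertices, so a point set that accommodates all triangulations on $n$ vertices is $n$-universal. This is a substantial reduction because triangulations have a combinatorial embedding that is unique up to the choice of outer face, and they can be enumerated efficiently (e.g. via the plantri-style generation of 3-connected planar triangulations, together with the few non-3-connected ones, or simply by generating all triangulations of the $n$-gon for small $n$ and gluing).

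Next I would address the search over point sets. Rather than search over all real coordinates, I would search over \emph{order types} of $n$ points, using the order type database of Aichholzer and Krasser~\cite{ak-psotd-01}, which for $n \le 10$ (in fact up to $n=11$) lists a realizing point configuration for every realizable order type. For each order type one can test embeddability of a given triangulation purely combinatorially: fix one of the $O(n^2)$ choices of an oriented outer triangle, and then try to realize the unique combinatorial embedding by a backtracking placement of the interior vertices — at each step a vertex lying in a triangular face must be placed on a point whose position splits the remaining points among the three sub-triangles exactly as dictated by the combinatorial structure, which by the argument in the proof of Lemma~\ref{lem:stacked_unique} leaves at most one valid point (for 3-trees) and in general a small branching factor. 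So for a fixed order type and a fixed triangulation, embeddability is decidable by a short search; a point set is $n$-universal iff every triangulation on $n$ vertices is embeddable for at least one choice of outer face.

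The algorithmic heart is then: iterate over the order types of $n$ points, and for each one test all triangulations on $n$ vertices; report success as soon as one order type passes all tests. To keep this tractable I would prune aggressively — for instance, discard an order type as soon as a single triangulation fails to embed, and order the triangulations so that "hard" ones (those with many high-degree vertices, or stacked triangulations, which by Section~\ref{sec:uni} are the bottleneck) are tried first. For $n=10$ the order type database has on the order of $2.3 \times 10^6$ entries and there are a few thousand triangulations, so the total is well within reach of a few CPU-hours; for smaller $n$ it is trivial. The output of the program, reported in the paper's accompanying source~\cite{universalsrc}, is an explicit $n$-universal configuration for each $n \le 10$ (and, as stated, the count of all such order types), which constitutes the proof.

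The main obstacle I anticipate is not the individual embeddability test but the sheer combinatorial volume at $n=10$: one must loop over millions of order types, and a naive implementation that re-enumerates triangulations or recomputes orientations from scratch each time would be far too slow. The fix is to precompute the list of triangulations once, precompute the orientation predicate of each order type as a lookup table, and exploit the uniqueness-of-embedding phenomenon to make each test essentially linear in $n$ per choice of outer face. A secondary subtlety is correctness bookkeeping — ensuring that "accommodates every triangulation on $n$ vertices, for some outer face each" really is equivalent to $n$-universality — which follows from the subgraph reduction above and from the fact that a plane straight-line drawing of a triangulation, restricted to a spanning subgraph, is a plane straight-line drawing of that subgraph.
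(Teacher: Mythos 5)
Your proposal is essentially identical to the paper's proof: the paper likewise reduces to maximal planar graphs (generated by \emph{plantri}), iterates over the Aichholzer--Krasser order type database for $n\le 10$, and verifies by computer search that some point set admits a plane straight-line embedding of every triangulation. The only differences are implementation details of the embeddability test (the paper simply searches over bijections $\varphi:V\to P$ rather than your structured backtracking by outer face), which do not affect correctness.
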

We use a straightforward brute-force approach. The two main ingredients are the
aforementioned order type database~\cite{ak-psotd-01} with point sets of size
$n\leq 10$ and the \emph{plantri} program for generating maximal planar
graphs~\cite{brinkmann2007fast,plantrisrc}. To determine if a point set $P$ of
size $n$ is $n$-universal, our program tests if for all maximal planar graphs
$G=(V,E)$ on $n$ vertices, there exists a bijection $\varphi : V\rightarrow P$ such
that straight-line drawing of $G$ induced by $\varphi$ is plane. If such a
bijection exists for all $G$, then $P$ is universal. Otherwise, there is a graph
$G$ that has no plane straight-line embedding on $P$. Note that it is sufficient
to consider maximal planar graphs since adding edges only makes the embedding
problem more difficult. Work on the case $n=11$ is still in progress at the time
of writing. For $n>11$ the approach unfortunately becomes infeasible; it is
unknown whether or not there exist $n$-universal point sets of size $n$ for
$11\leq n\leq 14$. Table~\ref{tab:ups_numbers} gives an overview of the results
of this paper and \figurename~\ref{fig:universal_5_10} shows one universal point
set for each $n=5,\dots,10$.

\begin{table}
  \centering
  \setlength{\tabcolsep}{2.5pt}
  \begin{tabular}{@{}r*{15}{c}@{}}
    \toprule%
    $n$: & $1$ & $2$ & $3$ & $4$ & $5$ & $6$ & $7$ & $8$ & $9$ & $10$ & $11$ & $12$ &
    $13$ & $14$ & $\geq 15$\\
    \# universal point sets: & $1$ & $1$ & $1$ & $1$ & $1$ & $5$ & $45$ & $364$ & $5'955$ & $2'072$ & $?$ &
    $?$ & $?$ & $?$ & $0$\\
    \bottomrule\\
  \end{tabular}
  \caption{The number of (non-equivalent) $n$-universal point sets of size $n$.}
  \label{tab:ups_numbers}
\end{table}

\begin{figure}[htbp]
  \centering%
  \newcommand{\thisfigwidth}{0.3\textwidth}%
  \includegraphics[width=\thisfigwidth]{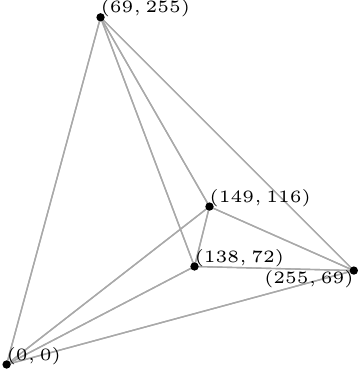}\hfil%
  \includegraphics[width=\thisfigwidth]{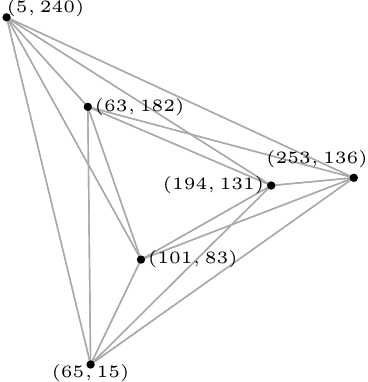}\hfil%
  \includegraphics[width=\thisfigwidth]{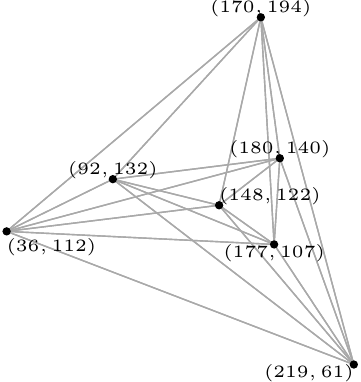}\\[\baselineskip]
  \includegraphics[width=\thisfigwidth]{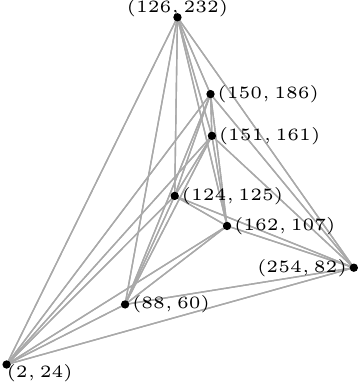}\hfil%
  \includegraphics[width=\thisfigwidth]{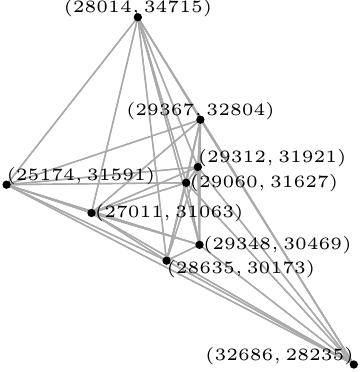}\hfil%
  \includegraphics[width=\thisfigwidth]{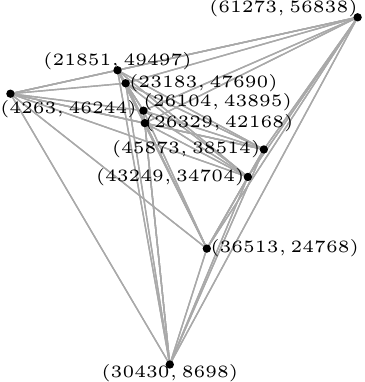}
  \caption{One universal point set for each $n=5,\dots,10$. Each pair of points
    is connected with a line segment.}
  \label{fig:universal_5_10}
\end{figure}

\newpage\bibliographystyle{plain}
\bibliography{universal}

\begin{thebibliography}{10}

\bibitem{AF05}
Bernardo~M. {\'A}brego and Silvia Fern{\'a}ndez-Merchant.
\newblock A lower bound for the rectilinear crossing number.
\newblock {\em Graphs and Combinatorics}, 21(3):293--300, 2005.

\bibitem{ak-psotd-01}
Oswin Aichholzer and Hannes Krasser.
\newblock The point set order type data base: A collection of applications and
  results.
\newblock In {\em Proc. 13th Canad. Conf. Comput. Geom.}, pages 17--20,
  Waterloo, Canada, 2001.

\bibitem{BP74}
Lowell~W. Beineke and Raymond~E. Pippert.
\newblock Enumerating dissectible polyhedra by their automorphism groups.
\newblock {\em Canad. J. Math.}, 26:50--67, 1974.

\bibitem{bcdeeiklm-spge-07}
Peter Brass, Eowyn Cenek, Cristian~A. Duncan, Alon Efrat, Cesim Erten, Dan~P.
  Ismailescu, Stephen~G. Kobourov, Anna Lubiw, and Joseph~S.B. Mitchell.
\newblock On simultaneous planar graph embeddings.
\newblock {\em Comput. Geom. Theory Appl.}, 36(2):117--130, 2007.

\bibitem{brinkmann2007fast}
Gunnar Brinkmann and Brendan McKay.
\newblock Fast generation of planar graphs.
\newblock {\em MATCH Communications in Mathematical and in Computer Chemistry},
  58(2):323--357, 2007.

\bibitem{plantrisrc}
Gunnar Brinkmann and Brendan McKay.
\newblock The program \texttt{plantri}.
\newblock \url{http://cs.anu.edu.au/~bdm/plantri}, 2007.

\bibitem{universalsrc}
Jean Cardinal, Michael Hoffmann, and Vincent Kusters.
\newblock A program to find all universal point sets.
\newblock \url{http://people.inf.ethz.ch/kustersv/universal.html}, 2013.

\bibitem{ck-lbsuspg-89}
Marek Chrobak and Howard~J. Karloff.
\newblock A lower bound on the size of universal sets for planar graphs.
\newblock {\em SIGACT News}, 20(4):83--86, 1989.

\bibitem{fpp-hdpgg-90}
Hubert de~Fraysseix, J{\'a}nos Pach, and Richard Pollack.
\newblock How to draw a planar graph on a grid.
\newblock {\em Combinatorica}, 10(1):41--51, 1990.

\bibitem{dmo-topp-}
Erik~D. Demaine, Joseph S.~B. Mitchell, and Joseph O'Rourke.
\newblock The {O}pen {P}roblems {P}ro\-ject, {P}roblem \#45.
\newblock \url{http://maven.smith.edu/~orourke/TOPP/P45.html}.

\bibitem{f-slrpg-48}
Istv{\'a}n F{\'a}ry.
\newblock On straight lines representation of planar graphs.
\newblock {\em Acta Sci. Math. Szeged}, 11:229--233, 1948.

\bibitem{FT12}
Radoslav Fulek and Csaba~D. T{\'o}th.
\newblock Universal point sets for planar three-trees.
\newblock {\em CoRR}, abs/1212.6148, 2012.

\bibitem{gp-ms-83}
Jacob~E. Goodman and Richard Pollack.
\newblock Multidimensional sorting.
\newblock {\em SIAM J. Comput.}, 12(3):484--507, 1983.

\bibitem{Kobourov}
Stephen~G. Kobourov.
\newblock personal communication, 2012.

\bibitem{Kurowski04}
Maciej Kurowski.
\newblock A 1.235 lower bound on the number of points needed to draw all
  n-vertex planar graphs.
\newblock {\em Information Processing Letters}, 92(2):95--98, 2004.

\bibitem{lvww-cqk-04}
L{\'a}szl{\'o} Lov{\'a}sz, Katalin Vesztergombi, Uli Wagner, and Emo Welzl.
\newblock Convex quadrilaterals and k-sets.
\newblock In J{\'a}nos Pach, editor, {\em Towards a Theory of Geometric
  Graphs}, volume 324 of {\em Contemporary Mathematics}, pages 139--148.
  American Mathematical Society, Providence, RI, 2004.

\bibitem{s-epgg-90}
Walter Schnyder.
\newblock Embedding planar graphs on the grid.
\newblock In {\em Proc. 1st ACM-SIAM Sympos. Discrete Algorithms}, pages
  138--148, 1990.

\bibitem{w-bzv-36}
Klaus Wagner.
\newblock {B}emerkungen zum {V}ierfarbenproblem.
\newblock {\em {J}ahresbericht der {D}eutschen
  {M}a\-the\-matiker-{V}ereinigung}, 46:26--32, 1936.

\end{thebibliography}

\end{document}